\definecolor{linkBlue}{rgb}{0,0.43,0.72}
\newenvironment{customthm}[1]
  {\innercustomthm}
  {\endinnercustomthm}
\title[Improved Cooperation by Exploiting a Common Signal]{Improved Cooperation by Exploiting a Common Signal}
\author{Panayiotis Danassis}
\affiliation{%
	\institution{\'Ecole Polytechnique F\'ed\'erale de Lausanne (EPFL) \\ Artificial Intelligence Laboratory}
	\city{Lausanne} 
	\country{Switzerland} 
	\postcode{CH-1015}
}
\email{panayiotis.danassis@epfl.ch}
\author{Zeki Doruk Erden}
\affiliation{%
	\institution{\'Ecole Polytechnique F\'ed\'erale de Lausanne (EPFL) \\ Artificial Intelligence Laboratory}
	\city{Lausanne} 
	\country{Switzerland} 
	\postcode{CH-1015}
}
\email{zeki.erden@epfl.ch}
\author{Boi Faltings}
\affiliation{%
	\institution{\'Ecole Polytechnique F\'ed\'erale de Lausanne (EPFL) \\ Artificial Intelligence Laboratory}
	\city{Lausanne} 
	\country{Switzerland} 
	\postcode{CH-1015}
}
\email{boi.faltings@epfl.ch}
\begin{abstract}
	Can artificial agents benefit from human conventions? Human societies manage to successfully self-organize and resolve the tragedy of the commons in common-pool resources, in spite of the bleak prediction of non-cooperative game theory. On top of that, real-world problems are inherently large-scale and of low observability. One key concept that facilitates human coordination in such settings is the use of conventions. Inspired by human behavior, we investigate the learning dynamics and emergence of temporal conventions, focusing on common-pool resources. Extra emphasis was given in designing a \emph{realistic evaluation setting}: (a) environment dynamics are modeled on real-world fisheries, (b) we assume decentralized learning, where agents can observe only their own history, and (c) we run large-scale simulations (up to 64 agents).

	Uncoupled policies and low observability make cooperation hard to achieve; as the number of agents grow, the probability of taking a correct gradient direction decreases exponentially. By introducing an \emph{arbitrary common signal} (e.g., date, time, or any periodic set of numbers) as a means to couple the learning process, we show that temporal conventions can emerge and agents reach \emph{sustainable} harvesting strategies. The introduction of the signal consistently improves the social welfare (by $258\%$ on average, up to $3306\%$), the range of environmental parameters where sustainability can be achieved (by $46\%$ on average, up to $300\%$), and the convergence speed in low abundance settings (by $13\%$ on average, up to $53\%$).
\end{abstract}
\keywords{Multi-agent Deep Reinforcement Learning; Coordination; Resource Allocation; Sustainability; Social Conventions; Social Dilemmas}
\newcommand{\BibTeX}{\rm B\kern-.05em{\sc i\kern-.025em b}\kern-.08em\TeX}
\begin{document}


\pagestyle{fancy}
\fancyhead{}


\maketitle 


\section{Introduction}

The question of \emph{cooperation} in socio-ecological systems and \emph{sustainability} in the use of common-pool resources constitutes a critical open problem. Classical non-cooperative game theory suggests that rational individuals will exhaust a common resource, rather than sustain it for the benefit of the group, resulting in the `the tragedy of the commons' \cite{hardin1968tragedy}. The tragedy of the commons arises when it is challenging and/or costly to exclude individuals from appropriating common-pool resources (CPR) of finite yield \cite{ostrom1994rules}. Individuals face strong incentives to appropriate, which results in \emph{overuse} and even \emph{permanent depletion} of the resource. Examples include the degradation of fresh water resources, the over-harvesting of timber, the depletion of grazing pastures, the destruction of fisheries, etc.

In spite of the bleak prediction of non-cooperative game theory, the tragedy of the commons is not inevitable, though conditions under which cooperation and sustainability can be achieved may be more demanding, the higher the stakes. Nevertheless, humans have been systematically shown to successfully self-organize and resolve the tragedy of the commons in CPR appropriation problems, even without the imposition of an extrinsic incentive structure \cite{ostrom1999coping}. E.g., by enabling the capacity to communicate, individuals have been shown to maintain the harvest to an optimal level \cite{ostrom1994rules,casari2003decentralized}. Though, communication creates overhead, and might not always be possible \cite{AAAI10-adhoc}. One of the key findings of empirical field research on sustainable CPR regimes around the world is the employment of \emph{boundary rules}, which prescribe who is authorized to appropriate from a resource \cite{ostrom1999coping}. Such boundary rules can be of temporal nature, prescribing the \emph{temporal order} in which people harvest from a common-pool resource (e.g., `protocol of play' \cite{BUDESCU1997179}). The aforementioned rules can be enforced by an authority, or emerge in a self-organized manner (e.g., by utilizing environmental signals such as the time, date, season, etc.) in the form of a \emph{social convention}.

Many real-world CPR problems are inherently \emph{large-scale} and \emph{partially observable}, which further increases the challenge of sustainability. In this work we deal with the \emph{most information-restrictive setting}: each participant is modeled as an individual agent with its own policy conditioned only on \emph{local information}, specifically his own history of action/reward pairs (\emph{fully decentralized} method). Global observations, including the resource stock, the number of participants, and the joint observations and actions, are hidden -- as is the case in many real-world applications, like commercial fisheries. Under such a setting, it is \emph{impossible to avoid positive probability mass on undesirable actions} (i.e., simultaneous appropriation), since there is no correlation between the agents' policies. This leads to either low social welfare, because the agents are being conservative, or, even worse, the \emph{depletion} of the resource. Depletion becomes more likely as the problem size grows due to the \emph{non-stationarity} of the environment and the \emph{global exploration problem}.

We propose a simple technique: allow agents to observe an \emph{arbitrary, common signal} from the environment. Observing a common signal mitigates the aforementioned problems because it allows for \emph{coupling} between the learned policies, increasing the joint policy space. Agents, for example, can now learn to harvest in turns, and with varying efforts per signal value, or allow for fallow periods. The benefit is twofold: the agents learn to not only avoid depletion, but also to maintain a healthy stock which allows for large harvest and, thus, higher social welfare. It is important to stress that \emph{we do not assume any a priori relation between the signal space and the problem at hand}. Moreover, we require no communication, no extrinsic incentive mechanism, and we do not change the underlying architecture, or learning algorithm. We simply utilize a means -- common environmental signals that are \emph{amply available to the agents} \cite{hart2000simple} -- to accommodate correlation between policies. This in turn enables the emergence of \emph{ordering conventions} of temporal nature (henceforth referred to as temporal conventions) and \emph{sustainable harvesting} strategies.


\subsection{Our Contributions} \label{Our Contributions}

\noindent
\textbf{(1) We are the first to introduce a realistic common-pool resource appropriation game for multi-agent coordination}, based on bio-economic models of commercial fisheries, and provide theoretical analysis on the dynamics of the environment.

\noindent
\textbf{(2) We propose a simple and novel technique: allow agents to observe an arbitrary periodic environmental signal.} Such signals are \emph{amply available} in the environment (e.g., time, date etc.) and can \emph{foster cooperation} among agents.

\noindent
\textbf{(3) We provide a thorough (quantitative \& qualitative) analysis} on the learned policies and demonstrate significant improvements on sustainability, social welfare, and convergence speed.

\subsection{Discussion \& Related Work} \label{Related Work}

As autonomous agents proliferate, they will be called upon to interact in ever more complex environments. This will bring forth the need for techniques that enable the emergence of sustainable cooperation. Despite the growing interest in and success of multi-agent deep reinforcement learning (MADRL), scaling to environments with a large number of learning agents continues to be a problem \cite{10.5555/3398761.3398813}. A multi-agent setting is inherently susceptible to many pitfalls: non-stationarity (moving-target problem), curse of dimensionality, credit assignment, global exploration, relative overgeneralization \cite{hernandez2019survey,matignon_laurent_le_fort_piat_2012,10.5555/997339}\footnote{Some of these adversities can be mitigated by the centralized training, decentralized execution paradigm. Yet, centralized methods likewise suffer from a plethora of other problems: they are computationally heavy, assume unlimited communication (which is impractical in many real-world applications), the exact same team has to be deployed (in the real-world we cooperate with strangers), and, most importantly, the size of the joint action space grows exponentially with the number of agents.}. Recent advances in the field of MADRL deal with only a limited number of agents. It is shown that as the number of agents increase, the probability of taking a correct gradient direction decreases exponentially \cite{hernandez2019survey}, thus the proposed methods cannot be easily generalized to complex scenarios with many agents.

Our approach aims to mitigate the aforementioned problems of MADRL by introducing \emph{coupling} between the learned policies. It is important to note that the proposed approach does not change the underlying architecture of the network (the capacity of the network stays the same), nor the learning algorithm or the reward structure. We simply augment the input space by allowing the observation of an arbitrary common signal. The signal has no a priori relation to the problem, i.e., \emph{we do not need to design an additional feature}; in fact \emph{we use a periodic sequence of arbitrary integers}. It is still possible for the original network (without the signal) to learn a sustainable strategy. Nevertheless, we show that the simple act of augmenting the input space drastically increases the social welfare, speed of convergence, and the range of environmental parameters in which sustainability can be achieved. Most importantly, the proposed approach requires no communication, creates no additional overhead, it is simple to implement, and scalable.

The proposed technique was inspired by temporal conventions in resource allocation games of non-cooperative game theory. The closest analogue is the courtesy convention of \cite{Danassis:2019:CMC:3306127.3331754}, where rational agents learn to coordinate their actions to access a set of indivisible resources by observing a signal from the environment. Closely related is the concept of the correlated equilibrium (CE) \cite{AUMANN197467,nisan2007algorithmic}, which, from a practical perspective, constitutes perhaps the most relevant non-cooperative solution concept \cite{hart2000simple}\footnote{Correlated equilibria also relate to boundary rules and temporal conventions in human societies; the most prominent example of a CE in real life is the traffic lights, which can also be viewed as a temporal convention for the use of the road.}. Most importantly, it is possible to achieve a correlated equilibrium without a central authority, simply by utilizing meaningless environmental signals \cite{Danassis:2019:CMC:3306127.3331754,cigler2013decentralized,borowski2014learning}. Such common environmental signals are \emph{amply available to the agents} \cite{hart2000simple}. The aforementioned line of research studies pre-determined strategies of rational agents. Instead, we study the emergent behaviors of a group of independent learning agents aiming to maximize the long term discounted reward. 

A second source of inspiration is behavioral conventions; one of the key concepts that facilitates human coordination\footnote{Humans are able to routinely and robustly cooperate in their every day lives in large-scale and under dynamic and unpredictable demand. They also have access to auxiliary information that help correlated their actions (e.g., time, date etc.).}. A convention is defined as a customary, expected, and self-enforcing behavioral pattern \cite{young1996economics,lewis2008convention}. It can be considered as a behavioral rule, designed and agreed upon ahead of time \cite{SHOHAM1995231,walker95understandingThe}, or it may emerge from within the system itself \cite{mihaylov2014decentralized,walker95understandingThe}. The examined temporal convention in this work falls on the latter category.

Moving on to the application domain, there has been great interest recently in CPR problems (and more generally, social dilemmas \cite{kollock1998social}) as an application domain for MADRL \cite{perolat2017multi,leibo2017multi,peysakhovich2017consequentialist,hughes2018inequity,peysakhovich2018prosocial,pmlr-v97-jaques19a,10.5555/3306127.3331756,10.5555/3398761.3398855,10.5555/3398761.3399016}. CPR problems offer complex environment dynamics and relate to real-world socio-ecological systems. There are a few distinct differences between the CPR models presented in the aforementioned works and the model introduced in this paper: First and foremost, we designed our model to \emph{resemble reality as closely as possible using bio-economic models of commercial fisheries} \cite{clark2006worldwide,diekert2012tragedy}, resulting in complex environment dynamics. Second, we have a \emph{continuous action space} which further complicates the learning process. Finally, we opted not to learn from visual input (raw pixels). The problem of direct policy approximation from visual input does not add complexity to the social dilemma itself; it only adds complexity in the feature extraction of the state. It requires large networks because of the additional complexity of extracting features from pixels, while only a small part of what is learned is the actual policy \cite{10.5555/3306127.3331796}. Most importantly, it makes harder to study the policy in isolation, as we do in this work. Moreover, from a practical perspective, learning from a visual input would be meaningless, given that we are dealing with a low observability scenario where the resource stock and the number and actions of the participants are hidden.

In terms of the methodology for dealing with the tragedy of the commons, the majority of the aforementioned literature falls broadly into two categories: Reward shaping \cite{hughes2018inequity,pmlr-v97-jaques19a,peysakhovich2018prosocial}, which refers to adding a term to the extrinsic reward an agent receives from the environment, and opponent shaping \cite{10.5555/3398761.3399016,10.5555/3398761.3398855,perolat2017multi}, which refers to manipulating the opponent (by e.g., sharing rewards, punishments, or adapting your own actions). Contrary to that, we only allow agents to observe an \emph{existing} environmental signal. \emph{We do not modify the intrinsic or extrinsic rewards, design new features, or require a communication network}. Finally, boundary rules emerged in \cite{perolat2017multi} as well in the form of spatial territories. Such territories can increase inequality, while we maintain high levels of fairness.


\section{Agent \& Environment Models} \label{Models}

\subsection{Multi-Agent Reinforcement Learning} \label{Multi-Agent Reinforcement Learning}

We consider a decentralized multi-agent reinforcement learning scenario in a partially observable general-sum Markov game \cite{Shapley1095}. At each time-step, agents take actions based on a partial observation of the state space, and receive an individual reward. Each agent learns a policy independently. More formally, let $\mathcal{N} = \{1, \dots, N\}$ denote the set of agents, and $\mathcal{M}$ be an $N$-player, partially observable Markov game defined on a set of states $\mathcal{S}$. An observation function $\mathcal{O}^n: \mathcal{S} \rightarrow \mathds{R}^d$ specifies agent $n$'s $d$-dimensional view of the state space. Let $\mathcal{A}^n$ denote the set of actions for agent $n \in \mathcal{N}$, and $\bm{a} = \times_{\forall n \in \mathcal{N}} a^n$, where $a^n \in \mathcal{A}^n$, the joint action. The states change according to a transition function $\mathcal{T}: \mathcal{S} \times \mathcal{A}^1 \times \dots \times \mathcal{A}^N \rightarrow \Delta(\mathcal{S})$, where $\Delta(\mathcal{S})$ denotes the set of discrete probability distributions over $\mathcal{S}$. Every agent $n$ receives an individual reward based on the current state $\sigma_t \in \mathcal{S}$ and joint action $\bm{a}_t$. The latter is given by the reward function $r^n : \mathcal{S} \times \mathcal{A}^1 \times \dots \times \mathcal{A}^N \rightarrow \mathds{R}$. Finally, each agent learns a policy $\pi ^n : \mathcal{O}^n \rightarrow \Delta(\mathcal{A}^n)$ independently through their own experience of the environment (observations and rewards). Let $\bm{\pi} = \times_{\forall n \in \mathcal{N}} \pi^n$ denote the joint policy. The goal for each agent is to maximize the long term discounted payoff, as given by $V^n_{\bm{\pi}} (\sigma_0) = \mathds{E} \left[ \sum_{t = 0}^\infty \gamma^t r^n(\sigma_t, \bm{a}_t) | \bm{a}_t \sim \bm{\pi}_t, \sigma_{t+1} \sim \mathcal{T}(\sigma_t, \bm{a}_t) \right]$, where $\gamma$ is the discount factor and $\sigma_0$ is the initial state.


\subsection{The Common Fishery Model} \label{Fishery Model}

In order to better understand the impact of self-interested appropriation, it would be beneficial to examine the dynamics of \emph{real-world} common-pool renewable resources. To that end, we present an abstracted bio-economic model for commercial fisheries \cite{clark2006worldwide,diekert2012tragedy}. The model describes the dynamics of the stock of a common-pool renewable resource, as a group of appropriators harvest over time. The harvest depends on (i) the effort exerted by the agents and (ii) the ease of harvesting a resource at that point of time, which depends on the stock level. The stock replenishes over time with a rate dependent on the current stock level.

More formally, let $\mathcal{N}$ denote the set of appropriators, $\epsilon_{n, t} \in [0, \mathcal{E}_{max}]$ the effort exerted by agent $n$ at time-step $t$, and $E_t = \sum_{n \in \mathcal{N}} \epsilon_{n, t}$ the total effort at time-step $t$. The total harvest is given by Eq. \ref{Eq: Total harvest}, where $s_{t} \in [0,\infty)$ denotes the stock level (i.e., amount of resources) at time-step $t$, $q(\cdot)$ denotes the catchability coefficient (Eq. \ref{Eq: Catchability coefficient function}), and $S_{eq}$ is the equilibrium stock of the resource.
\begin{equation} \label{Eq: Total harvest}
	H(E_t,s_t) =
	\begin{cases}
		q(s_t) E_t &, \text{if } q(s_t) E_t \leq s_t \\
		s_t  &, \text{otherwise} \\
	\end{cases}
\end{equation}
\begin{equation} \label{Eq: Catchability coefficient function}
	q(x) = 
	\begin{cases}
		\frac{x}{2 S_{eq}} &, \text{if } x \leq 2 S_{eq} \\
		1  &, \text{otherwise} \\
	\end{cases}
\end{equation}

Each environment can only sustain a finite amount of stock. If left unharvested, the stock will stabilize at $S_{eq}$. Note also that $q(\cdot)$, and therefore $H(\cdot)$, are proportional to the current stock, i.e., the higher the stock, the larger the harvest for the same total effort. The stock dynamics are governed by Eq. \ref{Eq: stock dynamics}, where $F(\cdot)$ is the spawner-recruit function (Eq. \ref{Eq: Spawner-recruit function}) which governs the natural growth of the resource, and $r$ is the growth rate. To avoid highly skewed growth models and unstable environments (`behavioral sink' \cite{calhoun1962population,doi:10.1177/00359157730661P202}), $r \in [-W(-1/(2 e)), -W_{-1}(-1/(2 e))] \approx [0.232, 2.678]$, where $W_k(\cdot)$ is the Lambert $W$ function (see Section \ref{supplement: Growth Rate} for details).
\begin{equation} \label{Eq: stock dynamics}
	s_{t+1} = F(s_t - H(E_t,s_t))
\end{equation}
\begin{equation} \label{Eq: Spawner-recruit function}
	F(x) = x e^{r(1 - \frac{x}{S_{eq}})}
\end{equation}

We assume that the individual harvest is proportional to the exerted effort (Eq. \ref{Eq: Individual harvest}), and the revenue of each appropriator is given by Eq. \ref{Eq: Revenue}, where $p_t$ is the price (\$ per unit of resource), and $c_t$ is the cost (\$) of harvesting (e.g., operational cost, taxes, etc.). Here lies the `tragedy': the benefits from harvesting are private ($p_t h_{n, t}(\epsilon_{n, t},s_t)$), but the loss is borne by all (in terms of a reduced stock, see Eq. \ref{Eq: stock dynamics}).
\begin{equation} \label{Eq: Individual harvest}
	h_{n, t}(\epsilon_{n, t},s_t) = \frac{\epsilon_{n, t}}{E_t} H(E_t,s_t)
\end{equation}
\begin{equation} \label{Eq: Revenue}
	u_{n, t}(\epsilon_{n, t},s_t) = p_t h_{n, t}(\epsilon_{n, t},s_t) - c_t
\end{equation}

\subsubsection{\textbf{Optimal Harvesting}} \label{Optimal Harvesting}

The question that naturally arises is: what is the `optimal' effort in order to harvest a yield that maximizes the revenue (Eq. \ref{Eq: Revenue}). We make two assumptions: First, we assume that the entire resource is owned by a single entity (e.g., a firm or the government), which possesses complete knowledge of and control over the resource. Thus, we only have a single control variable, $E_t$. This does not change the underlying problem since the total harvested resources are linear in the proportion of efforts put by individual agents (Eq. \ref{Eq: Individual harvest}). Second, we consider the case of zero discounting, i.e., future revenues are weighted equally with current ones. Of course firms (and individuals) do discount the future and bio-economic models should take that into account, but this complicates the analysis and it is out of the scope of this work. We argue we can still draw useful insight into the problem.

Our control problem consists of of finding a piecewise continuous control $E_t$, so as to maximize the total revenue ($\max_{E_t} \sum_{t = 0}^T U_{t}(E_{t},s_t)$). The maximization problem can be solved using Optimal Control Theory \cite{ding2010introduction_optimalcontrol,lenhart2007optimal}. We have proven the following theorem:


\begin{theorem} \label{Th: Optimal Harvesting Strategy}
	The optimal control variable $E^*_t$ that solves the maximization problem $\max_{E_t} \sum_{t = 0}^T U_{t}(E_{t},s_t)$ given the model dynamics described in Section \ref{Fishery Model} is given by Eq. \ref{Eq: theorem optimal lambda}, where $\lambda_t$ are the adjoint variables of the Hamiltonians:
	\begin{equation} \label{Eq: theorem optimal lambda}
		E^*_{t+1} =
		\begin{cases}
			E_{max} ,& \text{if } (p_{t+1}-\lambda_{t+1})q(F(s_t - H(E_t, s_t))) \geq 0 \\
			0  ,& \text{if } (p_{t+1}-\lambda_{t+1})q(F(s_t - H(E_t, s_t))) < 0 \\
		\end{cases}
	\end{equation}
\end{theorem}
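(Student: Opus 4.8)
The plan is to treat Eq.~\ref{Eq: stock dynamics} as the state equation of a finite-horizon optimal control problem --- scalar state $s$, scalar control $E\in[0,E_{max}]$, running revenue $U$, horizon $T$ --- and to apply Pontryagin's Maximum Principle in the continuous-time Hamiltonian form used by the cited optimal-control references. Working with the instantaneous ``growth minus harvest'' dynamics $\dot s = G(s) - H(E,s)$, where $G$ is the net natural growth induced by Eq.~\ref{Eq: Spawner-recruit function}, the decisive structural fact is that the harvest enters linearly: on the non-depletion branch $q(s)E\le s$ of Eq.~\ref{Eq: Total harvest} we have $H(E,s)=q(s)E$, and the revenue $U$ is affine in $H$ by Eqs.~\ref{Eq: Individual harvest}--\ref{Eq: Revenue}, so the Hamiltonian $\mathcal{H} = U(E,s) + \lambda\bigl(G(s) - H(E,s)\bigr) = (p-\lambda)\,q(s)\,E + \lambda\,G(s) - (\mathrm{cost})$ is \emph{affine in the control $E$}, with slope exactly $(p-\lambda)\,q(s)$.

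Granting that, the remainder is short. The Maximum Principle requires the optimal $E^\ast$ to maximize $\mathcal{H}$ over $[0,E_{max}]$ pointwise, and an affine function on an interval is maximized at an endpoint; hence $E^\ast = E_{max}$ when $(p-\lambda)q(s)\ge 0$ and $E^\ast=0$ when it is $<0$ --- a bang-bang law with switching function $(p-\lambda)q(s)$. I would then (i) write down the adjoint equation $\dot\lambda = -\,\partial\mathcal{H}/\partial s$ with the transversality condition at $T$, which is precisely what defines the $\lambda$ appearing in the statement; (ii) translate back to the discrete indexing of Section~\ref{Fishery Model} --- evaluating the control law one step ahead and using Eq.~\ref{Eq: stock dynamics} to express the stock at that step as $s_{t+1}=F\bigl(s_t-H(E_t,s_t)\bigr)$ turns $q(s_{t+1})$ into $q\bigl(F(s_t-H(E_t,s_t))\bigr)$ and yields Eq.~\ref{Eq: theorem optimal lambda} verbatim; and (iii) dispose of the depletion branch $q(s_t)E_t>s_t$ of Eq.~\ref{Eq: Total harvest}, where $H\equiv s_t$ does not depend on $E_t$, so $E_{max}$ is weakly optimal on that branch and, since it sends the stock to $F(0)=0$ and hence $q(s_{t+1})=0$, stays consistent with the displayed formula.

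I expect the genuine difficulties to be threefold. First, nonsmoothness: both $q(\cdot)$ (kink at $2S_{eq}$, Eq.~\ref{Eq: Catchability coefficient function}) and $H(\cdot,\cdot)$ (branch switch at $q(s)E=s$, Eq.~\ref{Eq: Total harvest}) are only piecewise-$C^1$, so $\mathcal{H}$ is merely piecewise smooth in $(s,E)$; one must either invoke a nonsmooth (Clarke) version of the Maximum Principle or argue that optimal trajectories remain where the data are differentiable, and then check that the adjoint equation still closes so the switching function comes out clean --- this is why the Hamiltonian should be built around $G(s)-H$ rather than the one-step map $F(s-H)$, which would otherwise attach a growth-derivative factor to $\lambda$. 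Second, the discrete--continuous reconciliation required to land on the exact indices of Eq.~\ref{Eq: theorem optimal lambda}, alongside the routine side conditions --- existence of an optimal control (compact control set, continuous data, finite horizon) and the observation that affinity in $E$ makes the necessary first-order conditions essentially sufficient away from the degenerate locus $(p-\lambda)q(s)=0$. Third, the singular case: if $(p-\lambda)q(s)$ vanishes over a whole interval the maximization step leaves $E^\ast$ undetermined and a separate higher-order (singular-control) argument is needed --- the one caveat to the clean bang-bang statement.
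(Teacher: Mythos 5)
Your central structural observation is the same one the paper relies on: the revenue is affine in the harvest, the harvest is linear in the effort on the non-depletion branch, so the Hamiltonian is affine in the control with slope $(p-\lambda)\,q(\cdot)$, and pointwise maximization over the compact interval $[0,\mathcal{E}_{max}]$ forces a bang--bang law determined by the sign of that slope. The adjoint caveats you list (nonsmoothness of $q$ and $H$, the singular locus where the switching function vanishes, existence) are reasonable and go beyond what the paper's own argument spells out.

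The genuine gap is the continuous-time detour. The theorem is a statement about the discrete-time problem $\max_{E_t}\sum_{t=0}^{T}U_t(E_t,s_t)$ with the one-step Ricker recursion $s_{t+1}=F(s_t-H(E_t,s_t))$, and the $\lambda_t$ in Eq.~\ref{Eq: theorem optimal lambda} are the adjoint variables of \emph{discrete} Hamiltonians. Replacing the recursion by $\dot s = G(s)-H(E,s)$ and invoking the continuous Pontryagin principle proves a statement about a different model; the step you describe as ``translate back to the discrete indexing'' is exactly where the content of the theorem sits, and an informal substitution $q(s_{t+1})=q\bigl(F(s_t-H(E_t,s_t))\bigr)$ does not supply it --- in particular the continuous adjoint $\dot\lambda=-\partial\mathcal H/\partial s$ is not the discrete adjoint recursion that defines the $\lambda_{t+1}$ appearing in the statement. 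The paper avoids this entirely by formulating the discrete-time Hamiltonians directly (following the cited discrete optimal-control treatments): because $s_{t+1}=F(s_t-H(E_t,s_t))$ is already fixed by stage-$t$ quantities, the stage-$(t+1)$ Hamiltonian is linear in $E_{t+1}$ with coefficient exactly $(p_{t+1}-\lambda_{t+1})\,q\bigl(F(s_t-H(E_t,s_t))\bigr)$, and the nonlinearity of $F$ is absorbed into the adjoint recursion rather than the switching function. Your worry that building the Hamiltonian around the one-step map ``attaches a growth-derivative factor to $\lambda$'' is therefore misplaced --- that factor appears in the adjoint equation, not in the coefficient of the control --- and it is the discrete formulation, not the continuous relaxation, that yields Eq.~\ref{Eq: theorem optimal lambda} verbatim. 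If you wish to keep your route, you must either prove a discrete maximum principle for this system (linearity in the control makes the pointwise-maximization form available without the usual convexity hypotheses) or rigorously justify the continuous-to-discrete passage; as written, neither is done.
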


\begin{proof}
	(sketch) We formulate the Hamiltonians \cite{lenhart2007optimal,ding2010introduction_optimalcontrol}, which turn out to be linear in the control variables $E_{t+1}$ with coefficients $(p_{t+1}-\lambda_{t+1})q(F(s_t - H(E_t, s_t)))$. Thus, the optimal sequence of $E_{t+1}$ that maximizes the Hamiltonians is given according to the sign of those coefficients. See Section \ref{supplement: Proof of Theorem 2.1} for the complete proof.
\end{proof}

The optimal strategy is a bang–bang controller, which switches based on the adjoint variable values, stock level, and price. The values for $\lambda_t$ do not have a closed form expression (because of the discontinuity of the control), but can be found iteratively for a given set of environment parameters ($r$, $S_{eq}$) and the adjoint equations \cite{lenhart2007optimal,ding2010introduction_optimalcontrol}. However, the discontinuity in the control input makes solving the adjoint equations quite cumbersome. We can utilize iterative forward/backward methods as in \cite{ding2010introduction_optimalcontrol}, but this is out of the scope of this paper.

There are a few interesting key points. First, to compute the optimal level of effort we require observability of the resource stock, which is not always a realistic assumption (in fact in this work we do not make this assumption). Second, we require complete knowledge of the strategies of the other appropriators. Third, even if both the aforementioned conditions are met, the bang-bang controller of Eq. \ref{Eq: theorem optimal lambda} does not have a constant transition limit; the limit changes at each time time-step, determined by the adjoint variable $\lambda_{t+1}$, thus finding the switch times remains quite challenging.

\subsubsection{\textbf{Harvesting at Maximum Effort}} \label{Harvesting at Maximum Effort}

To gain a deeper understanding of the dynamics of the environment, we will now consider a baseline strategy where every agent harvests with the maximum effort at every time-step, i.e., $\epsilon_{n, t} = \mathcal{E}_{max}, \forall n \in \mathcal{N}, \forall t$. This corresponds to the Nash Equilibrium of a stage game (myopic agents). For a constant growth rate $r$ and a given number of agents $N$, we can identify two interesting stock equilibrium points ($S_{eq}$): the `limit of sustainable harvesting', and the `limit of immediate depletion'.

The limit of sustainable harvesting ($S^{\scriptscriptstyle N, r}_{\scriptscriptstyle LSH}$) is the stock equilibrium point where the goal of sustainable harvesting becomes trivial: for any $S_{eq} > S^{\scriptscriptstyle N, r}_{\scriptscriptstyle LSH}$, the resource will not get depleted, even if all agents harvest at maximum effort. Note that the coordination problem \emph{remains far from trivial} even for $S_{eq} > S^{\scriptscriptstyle N, r}_{\scriptscriptstyle LSH}$, especially for increasing population sizes. Exerting maximum effort in environments with $S_{eq}$ close to $S^{\scriptscriptstyle N, r}_{\scriptscriptstyle LSH}$ will yield low returns because the stock remains low, resulting in a small catchability coefficient. In fact, this can be seen in Fig. \ref{fig:max_effort_rewards_all} which depicts the social welfare (SW), i.e., sum of utilities, against increasing $S_{eq}$ values ($N \in [2, 64]$, $\mathcal{E}_{max} = 1$, $r = 1$). Red dots\footnote{Slight deviations from the predicted theoretical values of Eq. \ref{Eq: LSH} due to the finite episode length and non-zero threshold.} denote the $S^{\scriptscriptstyle N, r}_{\scriptscriptstyle LSH}$. Thus, \emph{the challenge is not only to keep the strategy sustainable, but to keep the resource stock high, so that the returns can be high as well}.

\begin{figure}[t!]
	\centering
	\includegraphics[width = 1\linewidth, trim={0em 0em 0em 0em}, clip]{./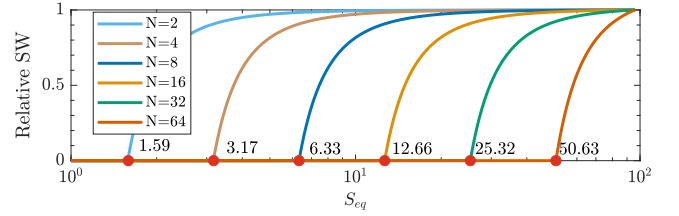}
	\caption{Social welfare (SW) -- normalized by the maximum SW obtained in each setting -- against increasing $S_{eq}$ values. $N \in [2, 64]$, $\mathcal{E}_{max} = 1$, and $r = 1$. $x$-axis is in logarithmic scale.}
	\Description{Social welfare against increasing stock equilibrium values.}
	\label{fig:max_effort_rewards_all}
\end{figure}

On the other end of the spectrum, the limit of immediate depletion ($S^{\scriptscriptstyle N, r}_{\scriptscriptstyle LID}$) is the stock equilibrium point where the resource is depleted in one time-step (under maximum harvest effort by all the agents). The problem does not become impossible for $S_{eq} \leq S^{\scriptscriptstyle N, r}_{\scriptscriptstyle LID}$, yet, \emph{exploration can have catastrophic effects} (amplifying the problem of global exploration in MARL). The following two theorems prove the formulas for $S^{\scriptscriptstyle N, r}_{\scriptscriptstyle LSH}$ and $S^{\scriptscriptstyle N, r}_{\scriptscriptstyle LID}$.

\begin{theorem}
	The limit of sustainable harvesting $S^{\scriptscriptstyle N, r}_{\scriptscriptstyle LSH}$ for a continuous resource governed by the dynamics of Section \ref{Fishery Model}, assuming that all appropriators harvest with the maximum effort $\mathcal{E}_{max}$, is:
	\begin{equation} \label{Eq: LSH}
		S^{\scriptscriptstyle N, r}_{\scriptscriptstyle LSH} = \frac{e^r N \mathcal{E}_{max}}{2 (e^r - 1)}
	\end{equation}
\end{theorem}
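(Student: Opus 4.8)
The plan is to exploit the fact that under the maximum-effort baseline the environment collapses to a one-dimensional dynamical system. Since every appropriator exerts effort $\mathcal{E}_{max}$, the total effort is constant, $E_t = N\mathcal{E}_{max}$. The spawner--recruit map obeys $F(x) \le \max_{y\ge 0} F(y) = (S_{eq}/r)\,e^{r-1} \le 2S_{eq}$, where the last inequality is precisely the bound imposed on $r$ through the Lambert $W$ function in Section~\ref{supplement: Growth Rate}; hence $s_t \le 2S_{eq}$ for every $t\ge 1$ regardless of $s_0$, so $q(s_t)=s_t/(2S_{eq})$. Moreover the claimed $S^{\scriptscriptstyle N,r}_{\scriptscriptstyle LSH}$ satisfies $S^{\scriptscriptstyle N,r}_{\scriptscriptstyle LSH} = \tfrac{e^r N\mathcal{E}_{max}}{2(e^r-1)} > \tfrac{N\mathcal{E}_{max}}{2}$, so for $S_{eq}$ at or above the threshold one has $q(s_t)E_t = \tfrac{N\mathcal{E}_{max}}{2S_{eq}}\,s_t < s_t$ and the first branch of Eq.~\ref{Eq: Total harvest} applies (for $S_{eq} < N\mathcal{E}_{max}/2$ the resource is wiped out in a single step, which is consistent with being below the threshold). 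Setting $\alpha := 1 - \tfrac{N\mathcal{E}_{max}}{2S_{eq}} \in (0,1)$, the post-harvest stock is $\alpha s_t$, and Eqs.~\ref{Eq: stock dynamics}--\ref{Eq: Spawner-recruit function} reduce to the scalar recursion $s_{t+1} = g(s_t)$ with $g(s) = \alpha s\, e^{r(1 - \alpha s/S_{eq})}$.

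Next I would analyse $g$. It has the trivial fixed point $g(0)=0$ with multiplier $g'(0) = \alpha e^{r}$, and a nontrivial fixed point determined by $1 = \alpha e^{r(1 - \alpha s^{*}/S_{eq})}$, i.e.\ $s^{*} = \dfrac{S_{eq}\,(r + \ln\alpha)}{r\,\alpha}$. This $s^{*}$ is well defined and strictly positive \emph{if and only if} $r + \ln\alpha > 0$, equivalently $\alpha e^{r} > 1$. Thus the existence of a positive equilibrium stock — and, as argued below, the sustainability of the resource — is governed entirely by the sign of $\alpha e^{r} - 1$.

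To convert this dichotomy into the statement of the theorem: if $\alpha e^{r} \le 1$, then $g(s)/s = \alpha e^{r(1-\alpha s/S_{eq})} < \alpha e^{r} \le 1$ for every $s>0$, so $(s_t)$ is strictly decreasing and bounded below by $0$; its limit must be a fixed point, hence $0$, and the resource is depleted. If instead $\alpha e^{r} > 1$, then $g(s) > s$ for all sufficiently small $s$, so the stock is pushed away from $0$, and one concludes $\liminf_t s_t > 0$, i.e.\ sustainable harvesting. Since $\alpha$ is strictly increasing in $S_{eq}$, the switch between the two regimes occurs exactly at $\alpha e^{r} = 1$, i.e.\ $\bigl(1 - \tfrac{N\mathcal{E}_{max}}{2S_{eq}}\bigr)e^{r} = 1$; solving for $S_{eq}$ yields $S_{eq} = \dfrac{N\mathcal{E}_{max}}{2(1-e^{-r})} = \dfrac{e^{r}N\mathcal{E}_{max}}{2(e^{r}-1)}$, which is the asserted $S^{\scriptscriptstyle N,r}_{\scriptscriptstyle LSH}$ (and indeed exceeds $N\mathcal{E}_{max}/2$, closing the consistency check of the first paragraph).

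The main obstacle is the last step of the $\alpha e^{r} > 1$ case: for $r$ in the upper part of the admissible interval ($r>2$) the Ricker-type map $g$ can lose stability of $s^{*}$ and settle into period-$2$ (or more complex) behaviour, so "not depleted'' cannot be obtained by a naive monotone-convergence argument. I would handle it by exhibiting an explicit compact forward-invariant interval $[\delta, M] \subset (0, 2S_{eq}]$ with $\delta>0$ — taking $M = \max_s g(s) = (S_{eq}/r)e^{r-1}$ and $\delta = g(M)$, and checking that $g(\delta)\ge\delta$ and $g([\delta,M])\subseteq[\delta,M]$, both of which follow from $\alpha e^r>1$ together with the bound on $r$ — so that once the orbit enters a neighbourhood of $s^{*}$ it never returns arbitrarily close to $0$; combined with $g(s)>s$ near $0$ this gives $\liminf_t s_t \ge \delta > 0$ for every positive initial stock. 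The remaining ingredients (the fixed-point algebra, monotonicity of $\alpha$ in $S_{eq}$, and the case analysis for the branches of $q$ and $H$) are routine.
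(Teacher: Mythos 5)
Your core derivation coincides with the paper's own proof. The paper likewise fixes $E_t=N\mathcal{E}_{max}$, uses $s_t<2S_{eq}$ (justified there via the Lambert-$W$ bound on $r$ and $s_0=S_{eq}$) to replace $q(s_t)$ by $s_t/(2S_{eq})$, reduces the dynamics to the scalar recursion $s_{t+1}=\beta s_t e^{r(1-\beta s_t/S_{eq})}$ with $\beta=1-\tfrac{N\mathcal{E}_{max}}{2S_{eq}}$ (your $\alpha$), and reads the threshold off the sign of $\beta e^{r}-1$ as $s_t\to 0^{+}$; solving $\beta e^{r}=1$ gives Eq.~\ref{Eq: LSH}. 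Where you differ is scope: the paper stops at this local condition, effectively taking ``the stock recovers when it becomes arbitrarily small'' as the meaning of sustainability, whereas you attempt a global dichotomy (monotone decay to $0$ when $\alpha e^{r}\le 1$, $\liminf_t s_t>0$ when $\alpha e^{r}>1$), which is a strictly stronger statement and is worth having because the map can be oscillatory for large $r$. Your depletion half is correct, and your observation that the admissible $r$-interval is exactly the condition $\max_x F(x)=(S_{eq}/r)e^{r-1}\le 2S_{eq}$ is a cleaner justification of $s_t\le 2S_{eq}$ than the paper's footnote.

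One step of your persistence argument does not go through as written: with $M=\max_s g(s)=(S_{eq}/r)e^{r-1}$ and $\delta=g(M)$, the check $g(\delta)\ge\delta$ is \emph{not} implied by $\alpha e^{r}>1$ together with the bound on $r$. Writing $c=S_{eq}/(r\alpha)$ for the critical point, the inequality fails whenever the interior fixed point $s^{*}$ lies on the increasing branch (i.e.\ $r+\ln\alpha<1$) and $g(M)>s^{*}$; for instance $r=0.5$, $\alpha=0.99$ gives $s^{*}\approx 0.99\,S_{eq}$, $M\approx 1.21\,S_{eq}$, $\delta=g(M)\approx 1.09\,S_{eq}>s^{*}$, hence $g(\delta)<\delta$ and $[\delta,M]$ is not forward invariant. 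This is a repairable slip rather than a wrong approach: in that regime the orbit is eventually trapped above $\min\{s^{*},g(M)\}$ by monotonicity on the increasing branch, and in general one can avoid the invariance claim altogether by choosing $\eta\in(0,\min\{s^{*},c\})$, noting that an orbit in $(0,\eta]$ must increase until it leaves that interval (there is no fixed point in it), and that from $[\eta,M]$ the next iterate is at least $\min\{g(\eta),g(M)\}>0$ by unimodality, so $\liminf_t s_t\ge\min\{\eta,g(\eta),g(M)\}>0$. None of this affects the threshold formula itself, which you obtain exactly as the paper does.
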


\begin{proof}
Note that for $S_{eq} > S^{\scriptscriptstyle N, r}_{\scriptscriptstyle LSH}$, $q(s_t)E_t < s_t, \forall t$, otherwise the resource would be depleted. Moreover, if $s_0 = S_{eq}$ -- which is a natural assumption, since prior to any intervention the stock will have stabilized on the fixed point -- then\footnote{Given that $r \in [-W(-1/(2 e)), -W_{-1}(-1/(2 e))]$.} $s_t < 2S_{eq}, \forall t$. Thus, we can re-write Eq. \ref{Eq: Total harvest} and \ref{Eq: Catchability coefficient function} as:
\begin{equation*} \label{Eq: lsh proof 0}
	H(E_{t},s_t) = \frac{s_t}{2 S_{eq}} E_t = \frac{s_t N \mathcal{E}_{max}}{2 S_{eq}}
\end{equation*}

Let $\alpha \triangleq \frac{N \mathcal{E}_{max}}{2 S_{eq}}$, and $\beta = 1 - \alpha$. The state transition becomes:
\begin{equation*} \label{Eq: lsh proof 1}
	s_{t+1} = F(s_t - \alpha s_t) = \beta s_t e^{r(1-\frac{\beta}{S_{eq}}s_t)}
\end{equation*}

We write it as a difference equation:
\begin{equation*} \label{Eq: lsh derive 2}
	\Delta_t(s_t) \triangleq s_{t+1}-s_t = (\beta e^{r(1-\frac{\beta}{S_{eq}}s_t)}-1)s_t
\end{equation*}

At the limit of sustainable harvesting, as the stock diminishes to\footnote{In practice, $\delta$ is enforced by the granularity of the resource.} $s_t = \delta \rightarrow 0$, to remain sustainable it must be that $\Delta_t(s_t) > 0$. Thus, it must be that:
\begin{equation*} \label{Eq: lsh proof 3}
	\underset{s_t \rightarrow 0^+}{\lim} sgn(\Delta_t(s_t)) > 0 \overset{s_t \rightarrow 0^+ > 0}{\Rightarrow} \beta e^r - 1 > 0 \Rightarrow S_{eq} > \frac{e^r N \mathcal{E}_{max}}{2 (e^r - 1)}
\end{equation*}
\end{proof}

\begin{theorem}
	The limit of immediate depletion $S^{\scriptscriptstyle N, r}_{\scriptscriptstyle LID}$ for a continuous resource governed by the dynamics of Section \ref{Fishery Model}, assuming that all appropriators harvest with the maximum effort $\mathcal{E}_{max}$, is given by:
	\begin{equation} \label{Eq: LID}
		S^{\scriptscriptstyle N, r}_{\scriptscriptstyle LID} = \frac{N \mathcal{E}_{max}}{2}
	\end{equation}
\end{theorem}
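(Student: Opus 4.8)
The plan is to unwind the one-step stock dynamics starting from the natural initial condition $s_0 = S_{eq}$ (prior to any harvesting the stock has stabilized at the fixed point of $F$) and to pin down exactly when a single round of maximum-effort harvesting drives the stock to zero. First I would note that since $S_{eq} > 0$ we trivially have $s_0 = S_{eq} \le 2 S_{eq}$, so the first branch of the catchability function (Eq.~\ref{Eq: Catchability coefficient function}) is active and $q(s_0) = \frac{s_0}{2 S_{eq}} = \frac{1}{2}$. With every agent exerting the maximum effort, $E_0 = N \mathcal{E}_{max}$, so the ``unconstrained'' harvest is $q(s_0) E_0 = \frac{N \mathcal{E}_{max}}{2}$.

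Next I would compare this quantity with the available stock $s_0 = S_{eq}$ to determine which branch of the harvest function (Eq.~\ref{Eq: Total harvest}) fires. If $\frac{N \mathcal{E}_{max}}{2} \ge S_{eq}$, the second branch applies, so $H(E_0, s_0) = s_0$, hence $s_0 - H(E_0, s_0) = 0$, and since $F(0) = 0 \cdot e^{r} = 0$ we get $s_1 = 0$: the resource is depleted in one time-step. Conversely, if $\frac{N \mathcal{E}_{max}}{2} < S_{eq}$, then $H(E_0, s_0) = q(s_0) E_0 < s_0$, so $s_0 - H(E_0, s_0) > 0$ and $s_1 = F(s_0 - H(E_0, s_0)) > 0$ (using $F(x) > 0$ for $x > 0$), so the stock survives the first step. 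This identifies the threshold precisely as $S^{\scriptscriptstyle N, r}_{\scriptscriptstyle LID} = \frac{N \mathcal{E}_{max}}{2}$, and, unlike $S^{\scriptscriptstyle N, r}_{\scriptscriptstyle LSH}$, it is independent of the growth rate $r$, because immediate depletion is decided entirely by the harvest of the very first step, before natural growth can act.

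There is no serious technical obstacle here; the argument is essentially a one-step case analysis of Eq.~\ref{Eq: Total harvest} and Eq.~\ref{Eq: Catchability coefficient function}. The only points that deserve care are (i) justifying $s_0 = S_{eq}$ as the relevant starting stock, (ii) checking that the strict/non-strict version of the inequality at $S_{eq} = \frac{N\mathcal{E}_{max}}{2}$ matches the convention used for the other limits (with ``depleted'' read as $s_1 = 0$), and (iii) the elementary observation that $F(x) > 0$ for every $x > 0$, which is what makes the converse direction genuinely exclude depletion above the threshold.
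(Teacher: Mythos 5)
Your argument is correct and is essentially the paper's own proof: both reduce to a one-step case analysis of when the second branch of Eq.~\ref{Eq: Total harvest} fires, i.e., $q(s_t)E_t \geq s_t$ with $q(s_t)=\frac{s_t}{2S_{eq}}$, giving $S_{eq} \leq \frac{N\mathcal{E}_{max}}{2}$. The only cosmetic difference is that you specialize to $s_0 = S_{eq}$ (so $q(s_0)=\tfrac12$) and spell out the converse via $F(x)>0$ for $x>0$, whereas the paper leaves $s_t$ general and lets it cancel, showing the depletion condition is independent of the stock level.
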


\begin{proof}
The resource is depleted if:
\begin{equation*} \label{Eq: lid proof 0}
	H(E_{t},s_t) = s_t \Rightarrow q(s_t) E_t \geq s_t \Rightarrow \frac{s_t}{2 S_{eq}} E_t \geq s_t \Rightarrow S_{eq} \leq \frac{N \mathcal{E}_{max}}{2}
\end{equation*}
\end{proof}

\subsection{Environmental Signal} \label{Environmental Signal}

We introduce an auxiliary signal; side information from the environment (e.g., time, date etc.) that agents can potentially use in order to facilitate coordination and reach more sustainable strategies. Real-world examples include shepherds that graze on particular days of the week or fishermen that fish on particular months. In our case, the signal can be thought as a mechanism to increase the set of possible (individual and joint) policies. Such signals are \emph{amply available to the agents} \cite{hart2000simple,Danassis:2019:CMC:3306127.3331754}. \emph{We do not assume any a priori relation between the signal and the problem at hand}. In fact, in this paper we use a set of arbitrary integers, that repeat periodically. We use $\mathcal{G} = \{1, \dots, G\}$ to denote the set of signal values.

\section{Simulation Results} \label{Simulation Results}

\subsection{Setup} \label{Setup}

\subsubsection{\textbf{Environment Settings}} \label{Environment Settings}

Let $p_t = 1$, and $c_t = 0$, $\forall t$. We set the growth rate at $r = 1$, the initial population at $s_0 = S_{eq}$, and the maximum effort at $\mathcal{E}_{max} = 1$. The findings of Section \ref{Harvesting at Maximum Effort} provide a guide on the selection of the $S_{eq}$ values. Specifically we simulated environments with $S_{eq}$ given by Eq. \ref{Eq: seq choices}, where $K = \frac{S^{\scriptscriptstyle N, r}_{\scriptscriptstyle LSH}}{N} = \frac{e^r \mathcal{E}_{max}}{2 (e^r - 1)} \approx 0.79$ is a constant and $M_s \in \mathbb{R}^+$ is a multiplier that adjusts the scarcity (difficulty). $M_s = 1$ corresponds to $S_{eq} = S^{\scriptscriptstyle N, r}_{\scriptscriptstyle LSH}$.
\begin{equation} \label{Eq: seq choices}
	S_{eq} = M_s K N
\end{equation}

\subsubsection{\textbf{Agent Architecture}} \label{Agent Architecture}

Each agent uses a two-layer ($64$ neurons each) neural network for the policy approximation. The input (observation $o^n = \mathcal{O}^n(S)$) is a tuple $\langle \epsilon_{n, t-1}, u_{n, t-1}(\epsilon_{n, t-1}, s_{t-1}), g_t \rangle$ consisting of the individual effort exerted and reward obtained in the previous time-step and the current signal value. The output is a continuous action value $a_t = \epsilon_{n, t} \in [0, \mathcal{E}_{max}]$ specifying the current effort level. The policies are trained using the Proximal Policy Optimization (PPO) algorithm \cite{DBLP:journals/corr/SchulmanWDRK17}. PPO was chosen because it avoids large policy updates, ensuring a smoother training, and avoiding catastrophic failures. The reward received from the environment corresponds to the revenue, i.e., $r^n(\sigma_t, \bm{a}_t) = u_{n, t}(\epsilon_{n, t},s_t)$, and the discount factor was set to $\gamma = 0.99$.


\subsubsection{\textbf{Signal Implementation}} \label{Signal Implementation}

The signal is represented as a $G$-dimensional one-hot encoded vector, where the high bit is shifted periodically. The initial value was chosen at random at the beginning of each episode to avoid bias towards particular values. Throughout this paper, the term \textit{no signal} will be used interchangeably to a unit signal size $G=1$, since a signal of size $1$ in one-hot encoding is just a constant input that yields no information. We evaluated signals of varying cardinality (see Section \ref{Results Robustness to Signal Size}).

\subsubsection{\textbf{Termination Condition}} \label{Termination Condition}

An episode terminates when either (a) the resource stock falls below a threshold $\delta = 10^{-4}$, or (b) a fixed number of time-steps $T_{max} = 500$ is reached. We trained our agents for a maximum of $5000$ episodes, with the possibility of early stopping if both of the following conditions are satisfied: (i) a minimum of $95\%$ of the maximum episode duration (i.e., $475$ time-steps) is reached for $200$ episodes in a row, and, (ii) the average total reward obtained by agents in each episode of the aforementioned $200$ episodes does not change by more than $5\%$. In case of early stopping, the metric values for the remainder of the episodes are extrapolated as the average of the last $200$ episodes, in order to properly average across trials.


\subsubsection{\textbf{Measuring The Influence of the Signal}} \label{Measuring The Influence of the Signal}

It is important to have a quantitative measure of the influence of the introduced signal. As such, we adapted the Causal Influence of Communication (CIC) \cite{lowe2019pitfalls} metric, initially designed to measure positive listening in emergent inter-agent communication. The CIC is calculated using the mutual information between the signal and the agent's action. Please see Section \ref{supplement: Causal Influence of Communication} for a complete description.

\subsubsection{\textbf{Reproducibility, Reporting of Results, Limitations}} \label{Reproducibility, Reporting of Results, Limitations}


Reproducibility is a major challenge in (MA)DRL due to different sources of stochasticity, e.g., hyper-parameters, model architecture, implementation details, etc. \cite{AAAI1816669,hernandez2019survey,Engstrom2020Implementation}. To minimize those sources, the implementation was done using RLlib\footnote{\url{https://docs.ray.io/en/latest/rllib.html}}, an open-source library for MADRL \cite{Liang2017Ray}. We refer the reader to Section \ref{supplement: Modeling Details} for a description of the architecture and hyper-parameters.

All simulations were \emph{repeated $8$ times} and the reported results are the average values of the last 10 episodes over those trials (excluding Fig. \ref{fig:strategies} which depicts a representative trial). (MA)DRL also lacks common practices for statistical testing \cite{AAAI1816669,hernandez2019survey}. In this work, we opted to use the Student's T-test \cite{student1908probable} due to it's robustness \cite{colas2019hitchhiker}. Nearly all of the reported results have p-values $< 0.05$.

Finally, we strongly believe that the community would benefit from reporting negative results. As such, we want to make clear that the proposed solution is not a panacea for all multi-agent coordination problems, not even for the proposed domain. For example, we failed to find sustainable policies using DDPG \cite{lillicrap2015continuous} -- with or without the signal -- for any set of environment parameters. This also comes to show the difficulty of the problem at hand. We suspect that the clipping in PPO's policy changes plays an important role in averting catastrophic failures in high-stakes environments.

\subsection{Results} \label{Results}



We present the result from a systematic evaluation of the proposed approach on a wide variety of environmental settings ($M_s \in [0.2, 1.2]$, i.e., ranging from way below the limit of immediate depletion, $M_s^{\scriptscriptstyle LID} \approx 0.63$, to above the limit of sustainable harvesting, $M_s^{\scriptscriptstyle LSH} = 1$) and population size ($N \in [2, 64]$). Due to lack of space we only present the most relevant results; see Section \ref{supplement: Simulation Results} for a complete report (e.g., results tables, fairness, small population sizes, etc.). 

In the majority of the results, we study the influence of a signal of cardinality $G=N$ compared to no signal ($G=1$). Thus, unless stated otherwise, the term `with signal' will refer to $G=N$.

\subsection{Sustainability \& Social Welfare} \label{Results Sustainability & Social Welfare}

\subsubsection{\textbf{Sustainability}} \label{Results Sustainability}

\begin{figure}[t!]
	\centering
	\begin{subfigure}[t]{1\linewidth}
		\centering
		\begin{minipage}{0.85\linewidth}
		\includegraphics[width = 1\linewidth, trim={0em 2em 0em 0em}, clip]{./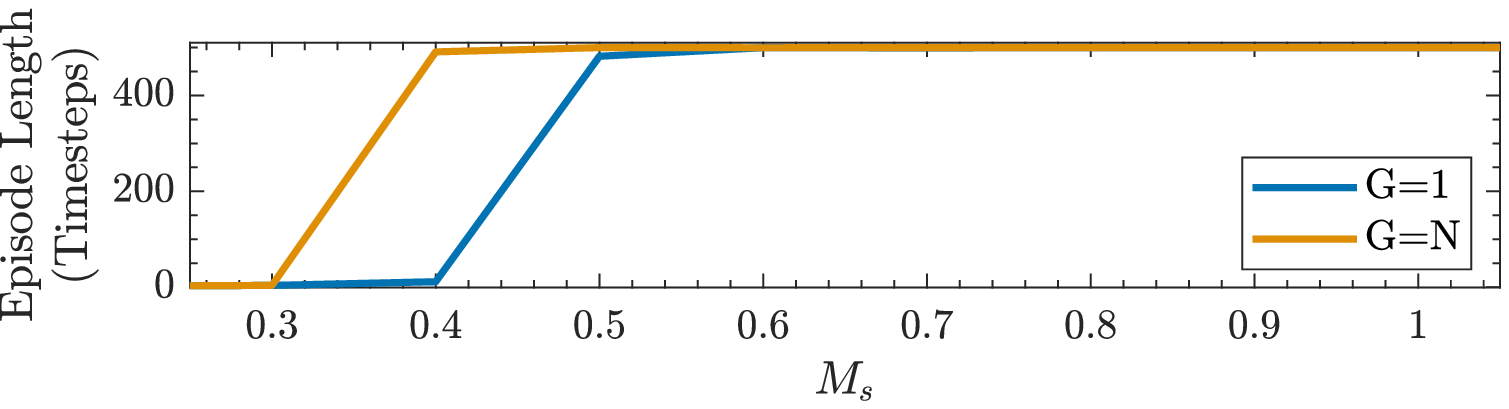}
		\end{minipage}\hfill
		\begin{minipage}{0.15\linewidth}
		\caption{$N=8$}
		\label{fig:eplens_n8}
		\end{minipage}
	\end{subfigure}
	\hfill
	\begin{subfigure}[t]{1\linewidth}
		\centering
		\begin{minipage}{0.85\linewidth}
		\includegraphics[width = 1\linewidth, trim={0em 2em 0em 0em}, clip]{./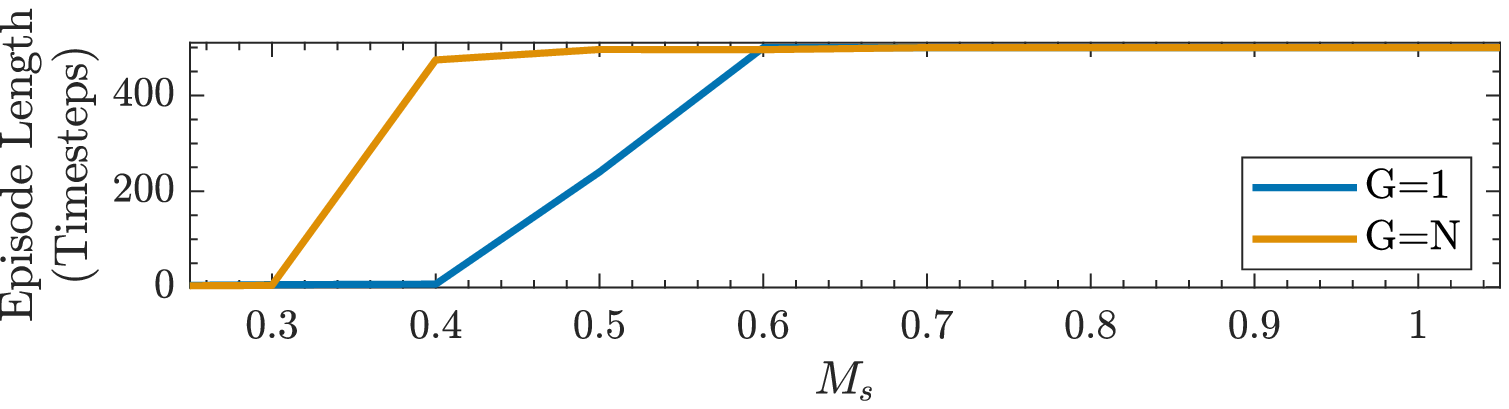}
		\end{minipage}\hfill
		\begin{minipage}{0.15\linewidth}
		\caption{$N=16$}
		\label{fig:eplens_n16}
		\end{minipage}
	\end{subfigure}
	\hfill
	\begin{subfigure}[t]{1\linewidth}
		\centering
		\begin{minipage}{0.85\linewidth}
		\includegraphics[width = 1\linewidth, trim={0em 2em 0em 0em}, clip]{./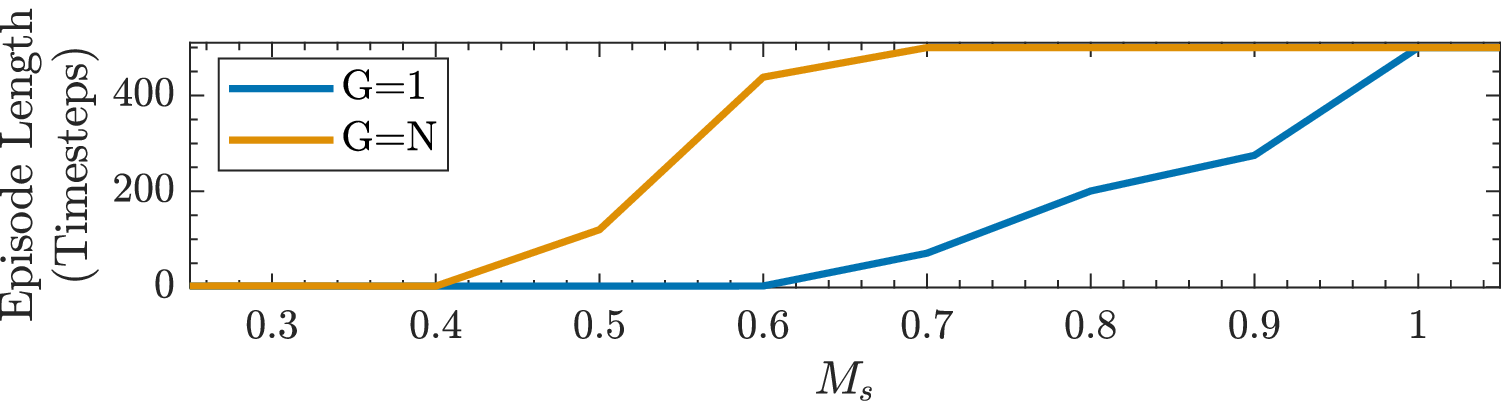}
		\end{minipage}\hfill
		\begin{minipage}{0.15\linewidth}
		\caption{$N=32$}
		\label{fig:eplens_n32}
		\end{minipage}
	\end{subfigure}
	\hfill
	\begin{subfigure}[t]{1\linewidth}
		\centering
		\begin{minipage}{0.85\linewidth}
		\includegraphics[width = 1\linewidth, trim={0em 0em 0em 0em}, clip]{./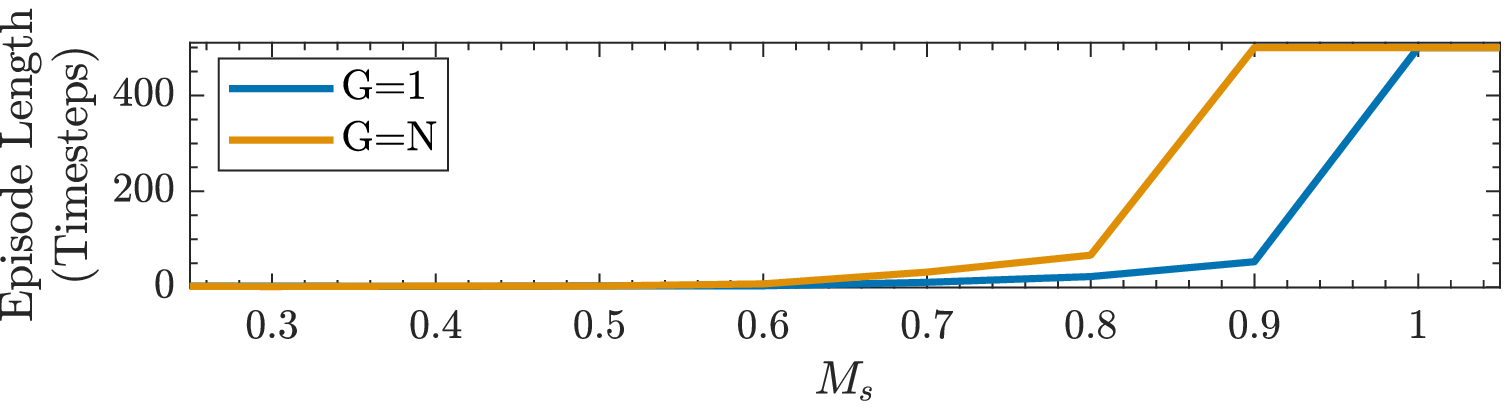}
		\end{minipage}\hfill
		\begin{minipage}{0.15\linewidth}
		\caption{$N=64$}
		\label{fig:eplens_n64}
		\end{minipage}
	\end{subfigure}
	\hfill
	
	\caption{Episode length, with and without the signal ($G=N$), for environments of decreasing difficulty (increasing equilibrium stock multiplier $M_s$).}
	\Description{Episode length, with and without the signal, for environments of decreasing difficulty}
	\label{fig:episode_lengths}
\end{figure}

We declare a strategy `sustainable', iff the agents reach the maximum episode duration ($500$ steps), i.e., they do not deplete the resource. Fig. \ref{fig:episode_lengths} depicts the achieved episode length -- with and without the presence of a signal ($G=N$) -- for environments of decreasing difficulty (increasing $S_{eq} \propto M_s$). The introduction of the signal significantly increases the range of environments ($M_s$) where sustainability can be achieved. Assuming that $M_s \in [0, 1]$ -- since for $M_s \geq 1$ sustainability is guaranteed by definition -- we have an increase of $17\% - 300\%$ ($46\%$ on average) in the range of sustainable $M_s$ values. Moreover, as the number of agents increases ($N = 32$ \& $64$), depletion is avoided in non-trivial $M_s$ values \emph{only with the introduction of the signal}. Finally, note that the $M_s$ value where a sustainable strategy is found increases with $N$, which demonstrates that the difficulty of the problem increases superlinearly to $N$ (given that $S_{eq} \propto M_s N$).

%

\subsubsection{\textbf{Social welfare}} \label{Results Social Welfare}

Reaching a sustainable strategy -- i.e., avoiding resource depletion -- is only one piece of the puzzle; an agent's revenue depends on the harvest (Eq. \ref{Eq: Total harvest}), which in turns depends on the catchability coefficient (Eq. \ref{Eq: Catchability coefficient function}). Thus, in order to achieve a high social welfare (sum of utilities, i.e., $\sum_{n \in \mathcal{N}} r^n(\cdot)$), the agents need to learn policies that balance the trade-off between maintaining a high stock (which ensues a high catchability coefficient), and yielding a large harvest (which results to a higher reward). This problem becomes even more apparent as resources become more abundant (i.e., for $M_s = 1 \pm x$, i.e., close to the limit of sustainable harvesting (below or, especially, \emph{above}), see Section \ref{Harvesting at Maximum Effort}). In these settings, it is easy to find a sustainable strategy; a myopic best-response strategy (harvesting at maximum effort) by all agents will not deplete the resource. Yet, it will result in low social welfare (SW).

Fig. \ref{fig:social_welfare_change} depicts the relative difference in SW, in a setting with and without the signal ($(SW_{G=N} - SW_{G=1}) / SW_{G=1}$, where $SW_{G=X}$ denotes the SW achieved using a signal of cardinality $X$), for environments of decreasing difficulty (increasing $S_{eq} \propto M_s$) and varying population size ($N \in [8, 64]$). To improve readability, changes greater than 100\% are shown with numbers on the top of the bars. Given the various sources of stochasticity, we opted to omit settings in which agents were not able to reach an episode duration of more than 10 time-steps (either with or without the signal).

The presence of the signal results in a significant improvement in SW. Specifically, we have an average of $258\%$ improvement \emph{across all the depicted settings}\footnote{The averaging is performed across the entire range of the depicted $M_s \in [0.2, 1.2]$, including the really scarce environments of $M_s = 0.2$ and $0.3$ where there is no sustainable strategy with or without the signal and, thus, the change is zero.} in Fig. \ref{fig:social_welfare_change}, while the maximum improvement is $3306\%$. These improvements stem from (i) achieving more sustainable strategies, and (ii) improved cooperation. The former results in higher rewards due to longer episodes in settings where the strategies without the signal deplete the resource. The latter allows to avoid over-harvesting, which results in higher catchability coefficient, in settings where both strategies (with, or without the signal) are sustainable. The contribution of the signal is much more pronounced under scarcity: the difference in achieved SW decreases as $M_s$ increases, eventually becoming less than $10\%$ ($M_s>1$ for $N=8$ \& $16$, and $M_s>1.2$ for $N=32$ \& $64$). This suggests that the proposed approach is of high practical value in environments where resources are \emph{scarce} (like most real-world applications), a claim that we further corroborate in Sections \ref{Results CIC} and \ref{Results Emergence of Temporal Conventions}.

\begin{figure}[t!]
	\centering
	\includegraphics[width = 1\linewidth, trim={0em 0em 0em 0em}, clip]{./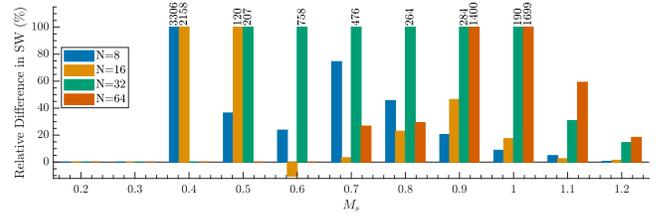}
	\caption{Relative difference in social welfare (SW) when signal of cardinality $G=N$ is introduced ($(SW_{G=N} - SW_{G=1}) / SW_{G=1}$, where $SW_{G=X}$ denotes the SW achieved using a signal of cardinality $X$), for environments of decreasing difficulty (increasing $S_{eq} \propto M_s$) and varying population size ($N \in [4, 64]$). To improve readability, changes greater than 100\% are shown with numbers on the top of the bars.}
	\Description{Relative difference in social welfare when signal of cardinality N is introduced.}
	\label{fig:social_welfare_change}
\end{figure}


\begin{figure}[t!]
	\centering
	\includegraphics[width = 1\linewidth, trim={0em 0em 0em 0em}, clip]{./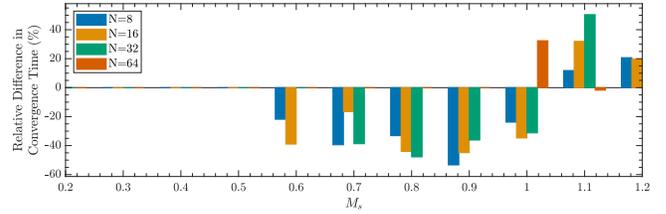}
	\caption{Relative difference in convergence time with the introduction of a signal ($(CT_{G=N} - CT_{G=1}) / CT_{G=1}$, where $CT_{G=X}$ denotes the time until convergence when using a signal of cardinality $X$), for environments of decreasing difficulty (increasing $S_{eq} \propto M_s$) and varying population size ($N \in [8, 64]$)}
	\Description{Relative difference in convergence time with the introduction of a signal.}
	\label{fig:convergence_time_change}
\end{figure}

\subsection{Convergence Speed} \label{Results Convergence Speed}

The second major influence of the introduction of the proposed signal -- besides the sustainability and efficiency of the learned strategies -- is on the convergence time. Let the system be considered converged when the global state does not change significantly. As a practical way to pinpoint the time of convergence, we used the `Termination Criterion' of Section \ref{Termination Condition}. Fig. \ref{fig:convergence_time_change} depicts the relative difference in convergence time with the introduction of a signal ($(CT_{G=N} - CT_{G=1}) / CT_{G=1}$, where $CT_{G=X}$ denotes the time until convergence, in \#episodes, when using a signal of cardinality $X$), for environments of decreasing difficulty (increasing $S_{eq} \propto M_s$) and varying population size ($N \in [8, 64]$). We have omitted the settings in which agents were not able to reach an episode duration of more than 10 time-steps (either with or without the signal).

There is a disjoint effect of the signal on the convergence speed. Up to the limit of sustainable harvesting ($M_s \leq 1$), the signal significantly improves the convergence speed ($13\%$ improvement on average, \emph{across all the depicted settings} including the ones with no improvement, and up to $53\%$). This is vital, as the majority of \emph{real-world problems involve managing scarce resources}. On the other hand, for $M_s > 1$, i.e., settings with abundant resources, the system converges faster without the signal ($14\%$ slower with the signal on average, across all the depicted settings). One possible explanation is that as resources become more abundant, it is harder (impossible for $M_s > 1$) for agents to deplete them. Therefore the learning is more efficient -- and the convergence is faster -- since the episodes tend to last longer (without needing the signal). Moreover, having an abundance of resources decouples the effects of the agents' actions to each other, reducing the variance, and again making easing the learning process without the signal.

\subsection{Influence of Signal on Agent Strategies} \label{Results CIC}

\begin{figure}[t!]
	\centering
	\includegraphics[width = 1\linewidth, trim={0em 0em 0em 0em}, clip]{./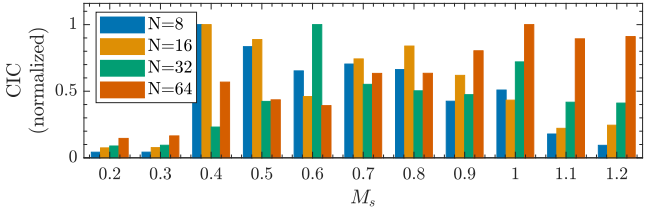}
	\caption{Average (over agents and trials) CIC values (normalized) vs. the equilibrium stock multiplier $M_s$, for population/signal size $N=G \in \{8, 16, 32 ,64\}$.}
	\Description{CIC values, averaged over agents and trials, vs. the equilibrium stock multiplier.}
	\label{fig:cic_all_normalised}
\end{figure}

The results presented so far provide a qualitative measure of the influence of the introduced signal through the improvement on sustainability, social welfare, and convergence speed. They also indicate a decrease on the influence of the signal as resources become abundant. The question that naturally arises is: how much do agents actually take the signal into account in their policies? To answer this question, Fig. \ref{fig:cic_all_normalised} depicts the CIC values -- a quantitative measure of the influence of the introduced signal (see Section \ref{Measuring The Influence of the Signal}) -- versus increasing values of $M_s$ (i.e, increasing $S_{eq} \propto M_s$, or more abundant resources), for population/signal size $N=G \in \{8, 16, 32 ,64\}$. The values are averaged across the 8 trials and the agents, and are normalized with respect to the maximum value for each population\footnote{For the absolute values please refer to Table \ref{tab: CIC values}. Fig. \ref{fig:cic_all_normalised} shows trends across $M_s$ values -- \emph{not} between populations sizes (due to the normalization).}. Higher CIC values indicate a higher causal influence of the signal.


CIC is low for the trials in which a sustainable strategy could not be found ($M_s=0.2 - 0.3$ for $N = 8$, $16$, $M_s=0.2 - 0.5$ for $N = 32$, and $M_s=0.2 - 0.8$ for $N = 64$, see Fig. \ref{fig:episode_lengths}). In cases where a sustainable strategy was reached (e.g., $M_s \geq 0.4$ for $N = 8$), we see significantly higher CIC values on scarce resource environments, and then the CIC decreases as $M_s$ increases. \emph{The harder the coordination problem, the more the agents rely on the environmental signal.}

\subsection{Robustness to Signal Size} \label{Results Robustness to Signal Size}

\begin{figure}[t!]
	\centering
	\begin{subfigure}[t]{1\linewidth}
		\centering
		\begin{minipage}{0.9\linewidth}
		\includegraphics[width =1\linewidth, trim={0em 1.8em 0em 0em}, clip]{./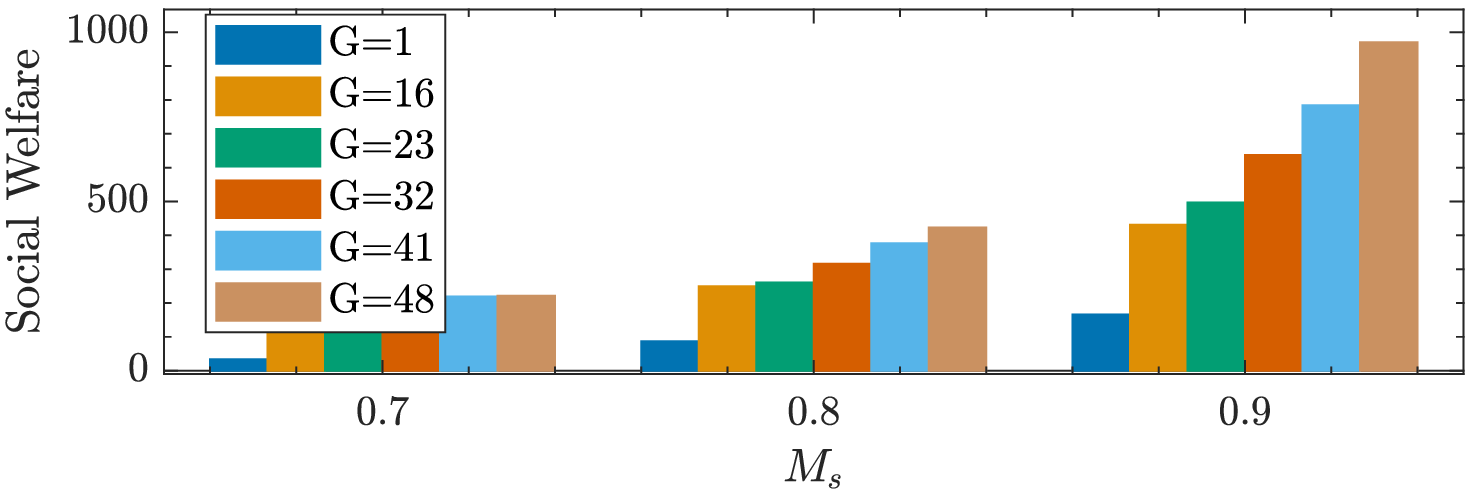}
		\end{minipage}\hfill
		\begin{minipage}{0.1\linewidth}
		\caption{ }
		\label{fig:rewards_lsr}
		\end{minipage}
	\end{subfigure}
	\hfill
	\begin{subfigure}[t]{1\linewidth}
		\centering
		\begin{minipage}{0.9\linewidth}
		\includegraphics[width = 1\linewidth, trim={0em 0em 0em 0em}, clip]{./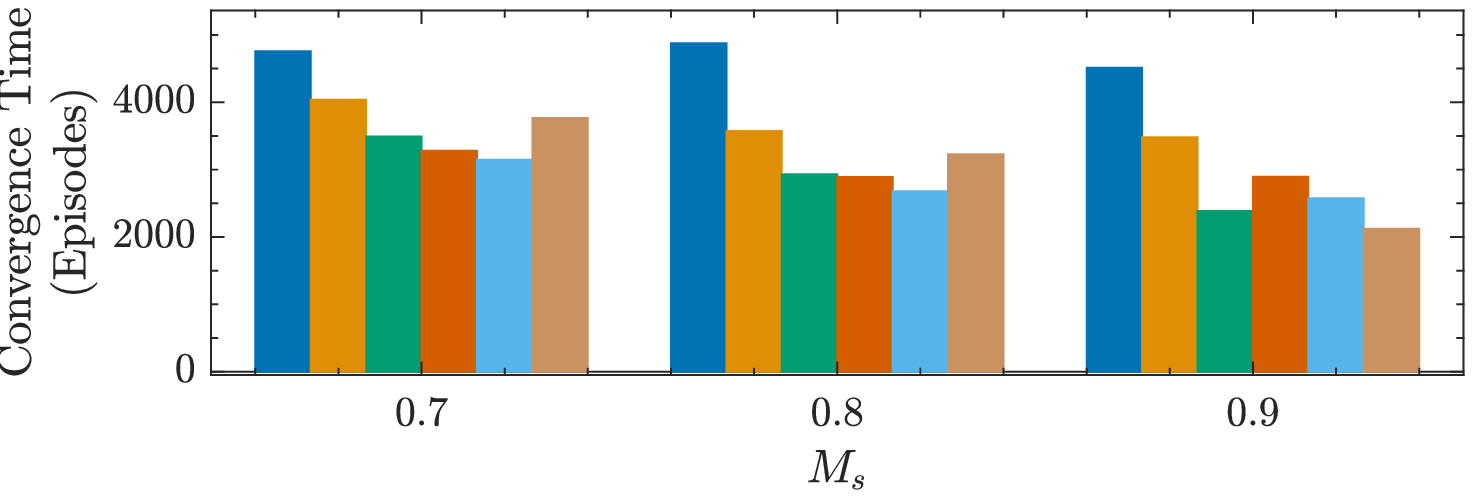}
		\end{minipage}\hfill
		\begin{minipage}{0.1\linewidth}
		\caption{ }
		\label{fig:numeps_lsr}
		\end{minipage}
	\end{subfigure}
	\hfill
	
	\caption{Achieved social welfare (Fig. \ref{fig:rewards_lsr}) and convergence time (Fig. \ref{fig:numeps_lsr}) for different signals of cardinality ($G$) $1$, $\frac{N}{2}=16$, $23$, $N=32$, $41$, and $\frac{3N}{2}=48$ (for varying resource levels $M_s$).}
	\Description{Achieved social welfare and convergence time for signal of different cardinality.}
	\label{fig:lsr_results}
\end{figure}

Up until now we have evaluated the presence (or lack thereof) of an environmental signal of cardinality equal to the population size ($G = N$). This requires exact knowledge of $N$, thus it is interesting to test the robustness of the proposed approach under varying signal sizes. As a representative test-case, we evaluated different signals of cardinality $G=1$, $\frac{N}{2}$, $23$, $N$, $41$, and $\frac{3N}{2}$ for $N=32$ and moderate scarcity for the resource ($M_s$ values of $0.7$, $0.8$ and $0.9$). The values $23$ and $41$ were chosen as they are prime numbers (i.e., \emph{not multiples} of $N$). Fig. \ref{fig:lsr_results} depicts the achieved social welfare and convergence time under the aforementioned settings.

Starting with Fig. \ref{fig:rewards_lsr} we can see that the SW increases with the signal cardinality. Specifically, we have $263\%$, $255\%$, $341\%$, $416\%$, and $474\%$ improvement on average across the three $M_s$ values for $G=\frac{N}{2}$, $23$, $N$, $41$, and $\frac{3N}{2}$, respectively. We hypothesize that the improvement stems from an increased joint strategy space that the larger signal size allows. A signal size larger than $N$ can also allow the emergence of `rest' (fallow) periods -- signal values where the majority of agents harvests at really low efforts. This would allow the resource to recuperate, and increase the SW through a higher catchability coefficient. See Section \ref{Results Emergence of Temporal Conventions} / Fig. \ref{fig:strategies_n2} for an example.

Regarding the convergence speed (Fig. \ref{fig:numeps_lsr}), we have $22\%$, $38\%$, $36\%$, $41\%$, and $36\%$ improvement on average (across $M_s$ values).


These results showcase that the introduction of the signal itself -- regardless of its cardinality or, more generally, its temporal representative power -- provides a clear benefit to the agents in terms of SW and convergence speed. This greatly improves the real-world applicability of the proposed technique, as the the \emph{knowledge of the exact population size is not required}; instead the agents can opt to select any signal available in their environment\footnote{The signal is represented as a one-hot vector, i.e., Fig. \ref{fig:lsr_results} shows that a network with 32 inputs can work for population sizes $N \in [16, 48]$, or equivalently, that agents in a population of size $N=32$ can use networks with $16 - 48$ inputs for the signal.}. Moreover, the signal cardinality can also be considered as a design choice, depending on the requirements and limitations of the system.

\subsection{Emergence of Temporal Conventions} \label{Results Emergence of Temporal Conventions}

\subsubsection{\textbf{Qualitative Analysis}} \label{Qualitative Analysis}

We have seen that the introduction of an arbitrary signal facilitates cooperation and the sustainable harvesting. But do temporal conventions actually emerge?

Fig. \ref{fig:strategies_n4} presents an example of the evolution of the agents' strategies for each signal value for a population of $N=4$, signal size $G=N=4$, and equilibrium stock multiplier $M_s=0.5$ (smoothed over 50 episodes). Each row represents an agent (agent $n_i$), while each column represents a signal value (value $g_j$). Each line represents the average effort the agent exerts on that specific signal value -- calculated by averaging the actions of the agent in each corresponding signal value across the episode.

We can see a clear temporal convention emerging: at signal value $g_1$ (first column), only agents $n_1$ and $n_3$ harvest (first and third row), at $g_2$, $n_2$ and $n_4$ harvest, at $g_3$, $n_1$ and $n_3$ harvest, and, finally, at $g_4$, $n_2$ and $n_4$. Contrary to that, in a sustainable joint strategy without the use of the signal, every agent harvests at every time-step with an average (across all agents) effort of $\approx 40\%$ (for the same setting of $N=4$ and $M_s=0.5$). \emph{Having all agents harvesting at every time-step makes coordination increasingly harder as we increase the population size}, mainly due to the non-stationarity of the environment (high variance) and the global exploration problem.

\subsubsection{\textbf{Access Rate}} \label{Access Rate}

In order to facilitate a systematic analysis of the accessing patterns, we discretized the agents into three bins: agents harvesting with effort $\epsilon \in [0 - 0.33)$ (`idle'), $[0.33 - 0.66)$ (`moderate'), and $[0.66 - 1]$ (`active'). Then we counted the average number of agents in each bin at the first equilibrium stock multiplier ($M_s$) where a non-depleting strategy was achieved in each setting. Without a signal, either the majority of the agents are `moderate' harvesters (specifically $84\%$ for $N=8$ and $16$), or \emph{all} of them are `active' harvesters ($100\%$ for $N=32$ and $64$). With the signal, we have a clear separation into `idle' and `active': (`idle', `active') = $(61\%, 30\%)$, $(59\%, 28\%)$, $(38\%, 44\%)$, $(50\%, 40\%)$, for $N = 8$, $16$, $32$, and $64$, respectively\footnote{The setting with $N = 64$ was run with $r = 2$ in both cases (with and without the signal). See Section \ref{supplement: Simulation Results} for more information.}. It is apparent that with the signal the agents learn a temporal convention; \emph{only a minority is `active' per time-step}, allowing to maintaining a healthy stock and reach \emph{sustainable strategies of high social welfare}.

\subsubsection{\textbf{Fallowing}} \label{Fallowing}

A more interesting joint strategy can be seen in Fig. \ref{fig:strategies_n2} ($N=2$, $M_s=0.5$). In this setting, we have an increased number of available signals, specifically $G=\frac{3N}{2}=3$. We can see that agents harvest alternatingly in the first two signal values, and rest on the third (\emph{fallow period}), potentially to allow resources to replenish and consequently obtain higher rewards in the future due to a higher catchability coefficient. This also resembles the optimal (bang-bang) harvesting strategy of Theorem \ref{Th: Optimal Harvesting Strategy}.

\begin{figure}[t!]
	\centering
	\begin{subfigure}[t]{1\linewidth}
		\centering
		\begin{minipage}{0.999995\linewidth}
		\includegraphics[width = 1\linewidth, trim={0.5em 0em 0em 1.2em}, clip]{./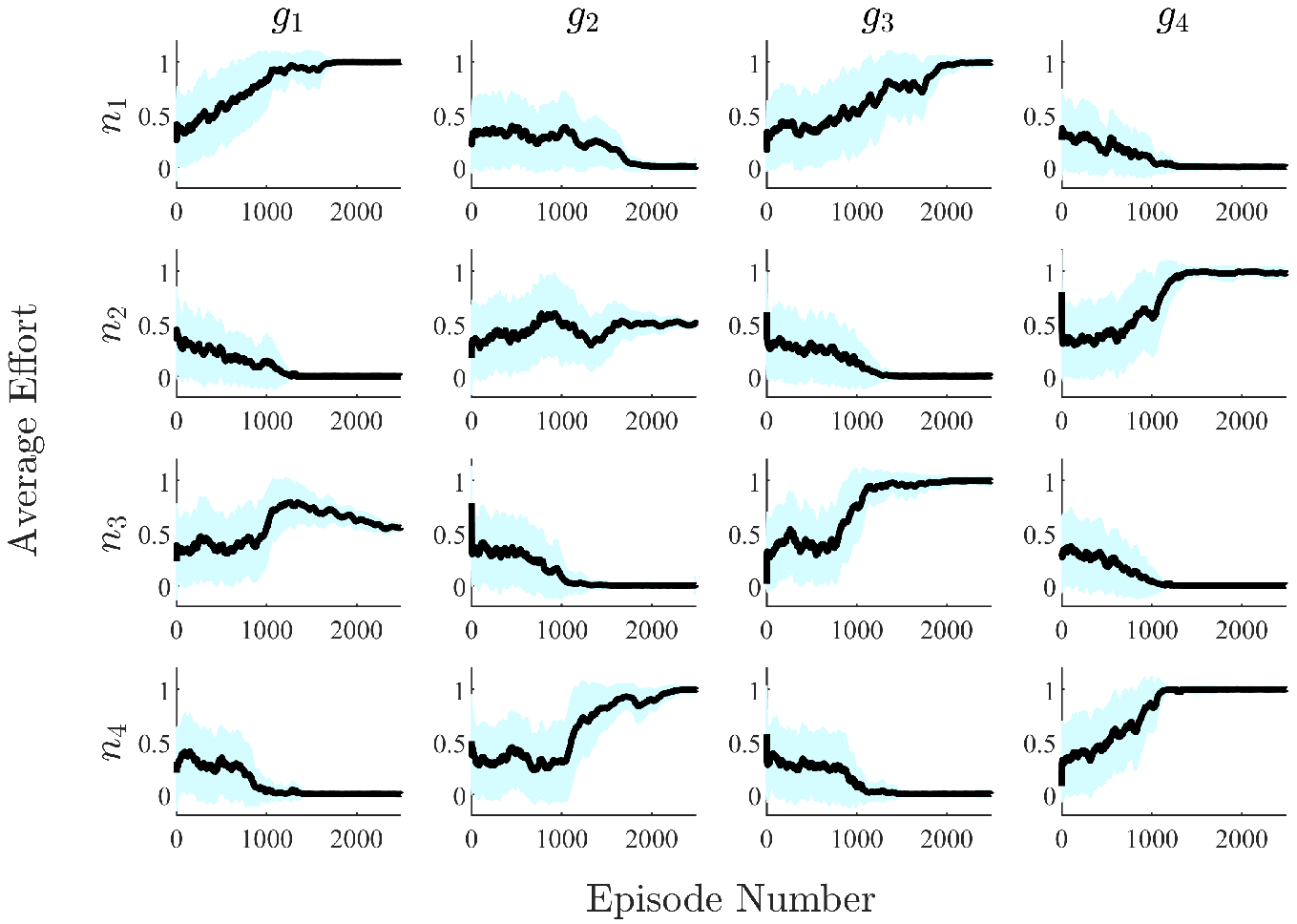}
		\end{minipage}\hfill
		\begin{minipage}{0.000005\linewidth}
		\caption{ }
		\label{fig:strategies_n4}
		\end{minipage}
	\end{subfigure}
	\hfill
	\begin{subfigure}[t]{1\linewidth}
		\centering
		\begin{minipage}{0.999995\linewidth}
		\includegraphics[width = 0.75\linewidth, trim={-1em 0em 0em 0em}, clip]{./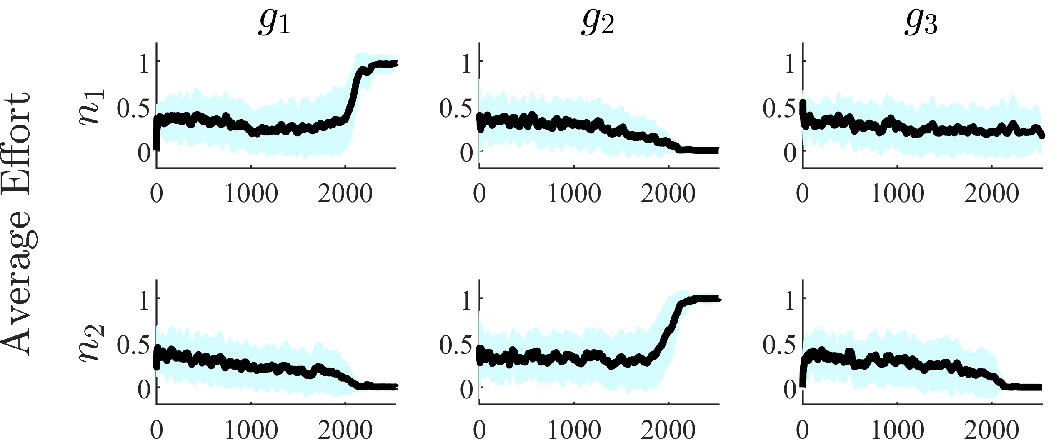}
		\end{minipage}\hfill
		\begin{minipage}{0.000005\linewidth}
		\caption{ }
		\label{fig:strategies_n2}
		\end{minipage}
	\end{subfigure}
	\hfill
	
	\caption{Evolution of the agents' strategies for each signal value, smoothed over 50 episodes. Fig. \ref{fig:strategies_n4} pertains to a population of $N=4$ and signal size $G=N=4$, while Fig. \ref{fig:strategies_n2} to a population of $N=2$ and signal size $G=\frac{3N}{2}=3$. In both cases the equilibrium stock multiplier is $M_s=0.5$. Each row represents an agent ($n_i$), while each column a signal value ($g_j$). Each line depicts the average effort the agent exerts on that specific signal value -- calculated by averaging the actions of the agent in each corresponding signal value across the episode. Shaded areas represent one standard deviation.}
	\Description{Evolution of the agents' strategies for each signal value.}
	\label{fig:strategies}
\end{figure}



\section{Conclusion} \label{Conclusion}

The challenge to cooperatively solve `the tragedy of the commons' remains as relevant now as when it was first introduced by \citeauthor{hardin1968tragedy} in \citeyear{hardin1968tragedy}. Sustainable development and avoidance of catastrophic scenarios in socio-ecological systems -- like the permanent depletion of resources, or the extinction of endangered species -- constitute critical open problems. To add to the challenge, real-world problems are inherently large in scale and of low observability. This amplifies traditional problems in multi-agent learning, such as the global exploration and the moving-target problem. Earlier work in common-pool resource appropriation utilized intrinsic or extrinsic incentives (e.g., reward or opponent shaping). Yet, such techniques need to be designed for the problem at hand and/or require communication or observability of states/actions, which is not always feasible (e.g., in commercial fisheries, the stock or harvesting efforts can not be directly observed). Humans on the other hand show a remarkable ability to self-organize and resolve common-pool resource dilemmas, often \emph{without any extrinsic incentive mechanism or communication}. Social conventions and the use of auxiliary environmental information constitute key mechanisms for the emergence of cooperation under low observability. In this paper, we demonstrate that utilizing such environmental signals -- which are amply available -- is a simple, yet powerful and robust technique, to foster cooperation in large-scale, low observability, and high-stakes environments. We are the first to tackle a realistic CPR appropriation scenario modeled on real-world commercial fisheries and under low observability. Our approach avoids permanent depletion in a wider (up to $300\%$) range of settings, while achieving higher social welfare (up to $3306\%$) and convergence speed (up to $53\%$).




\bibliographystyle{ACM-Reference-Format} 
\bibliography{arXiv_fisheries_bibliography}


\appendix

\section{Appendix}

\subsection{Contents}

In this appendix we include several details that have been omitted from the main text. In particular:

\begin{itemize}
	\item[-] In Section \ref{supplement: Proof of Theorem 2.1}, we prove Theorem \ref{Th: Optimal Harvesting Strategy}.
	\item[-] In Section \ref{supplement: Growth Rate}, we investigate the range of feasible values for the growth rate, $r$.
	\item[-] In Section \ref{supplement: Modeling Details}, we provide details on the agent architecture, the introduced signal, the CIC metric, and the fairness indices.
	\item[-] In Section \ref{supplement: Simulation Results}, we provide a thorough account of the simulation results.
\end{itemize}

\begin{figure*}[t!]
	\centering
	\begin{subfigure}[t]{0.24\linewidth}
		\centering
		\includegraphics[width = 1\linewidth, trim={0em 0em 0em 0em}, clip]{./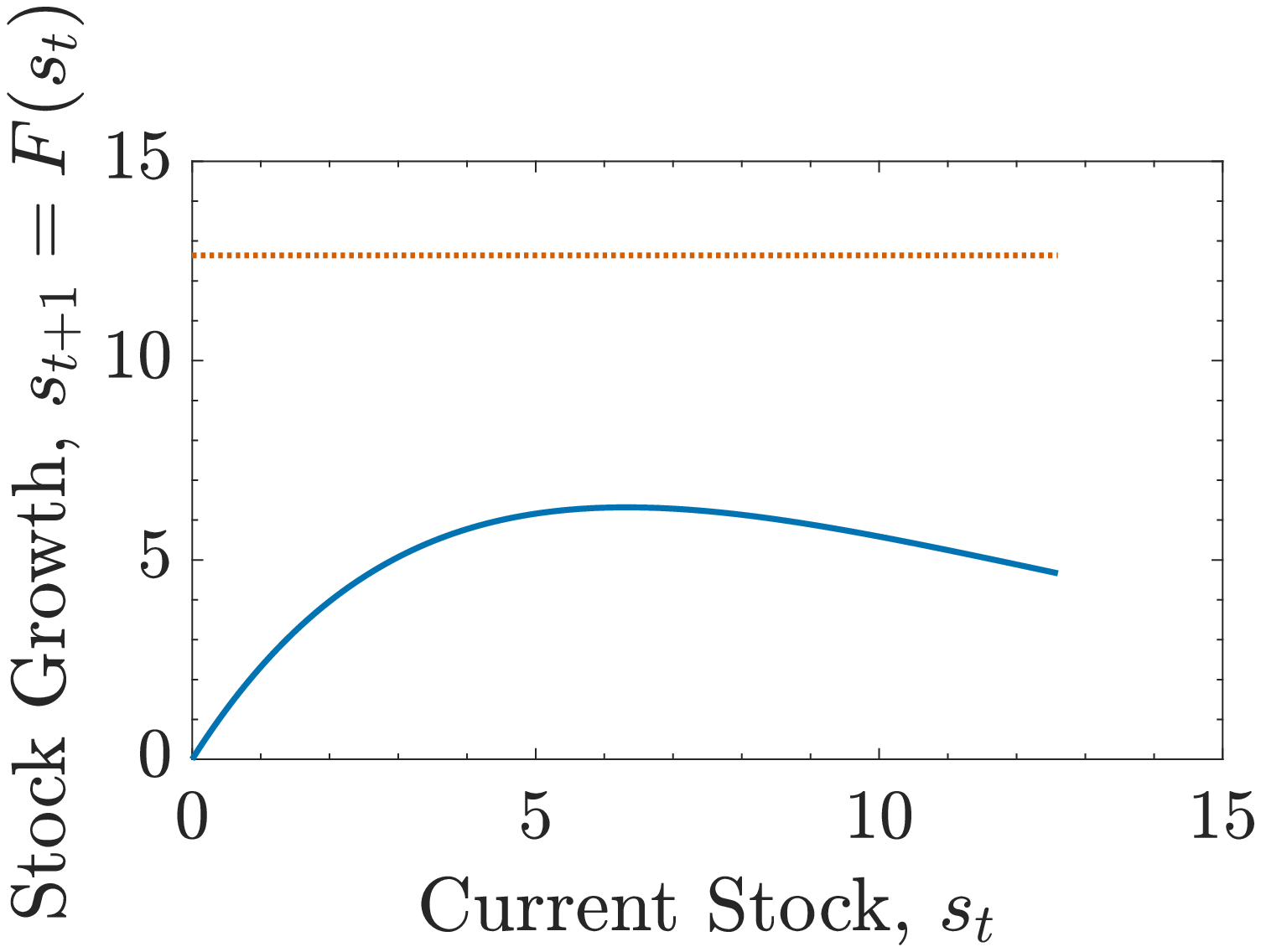}
		\caption{$r=1$}
		\label{fig:model_dynamics_r_1}
	\end{subfigure}
	\hfill
	\begin{subfigure}[t]{0.22\linewidth}
		\centering
		\includegraphics[width = 1\linewidth, trim={4em 0em 0em 0em}, clip]{./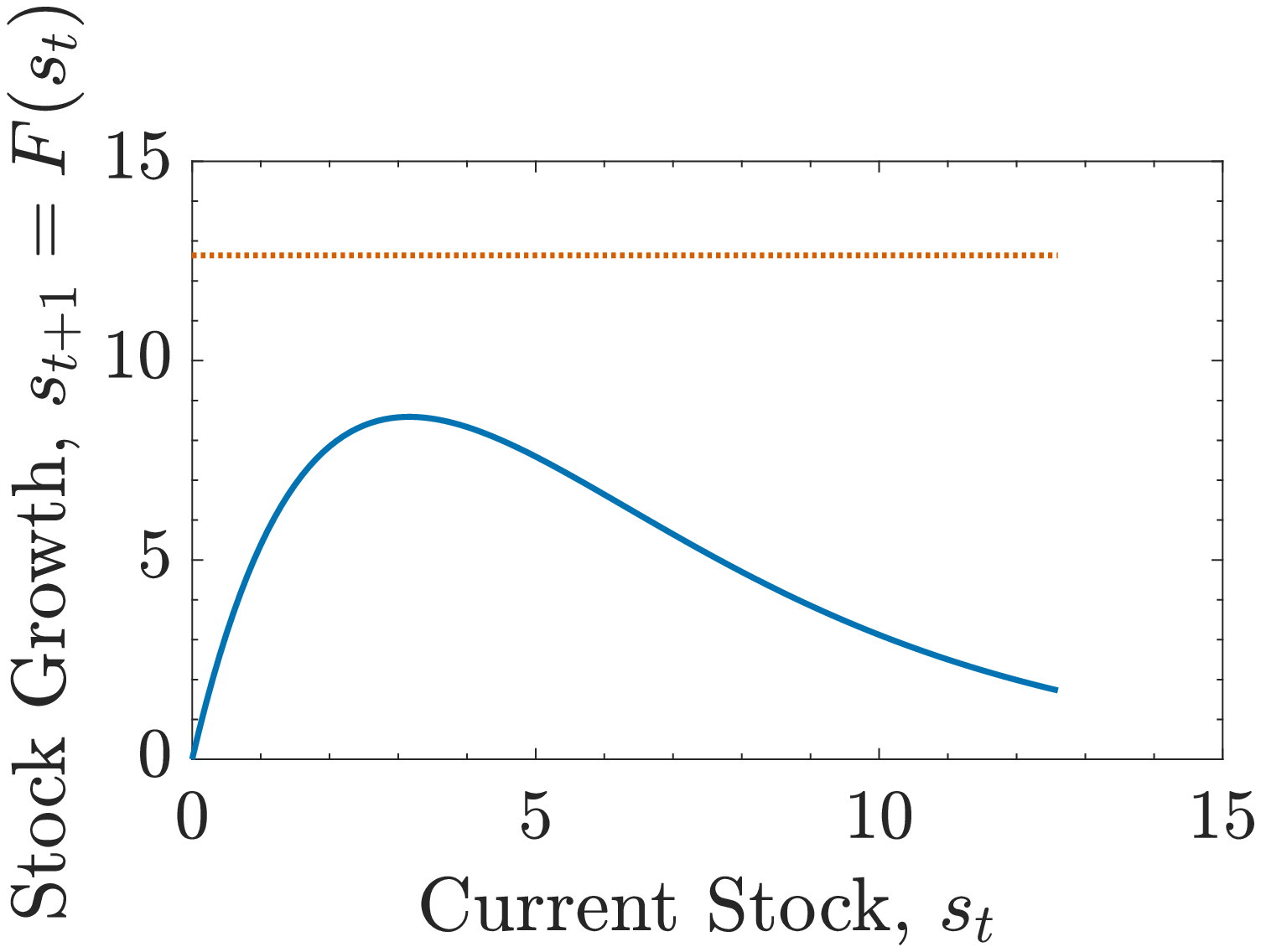}
		\caption{$r=2$}
		\label{fig:model_dynamics_r_2}
	\end{subfigure}
	\hfill
	\begin{subfigure}[t]{0.22\linewidth}
		\centering
		\includegraphics[width = 1\linewidth, trim={4em 0em 0em 0em}, clip]{./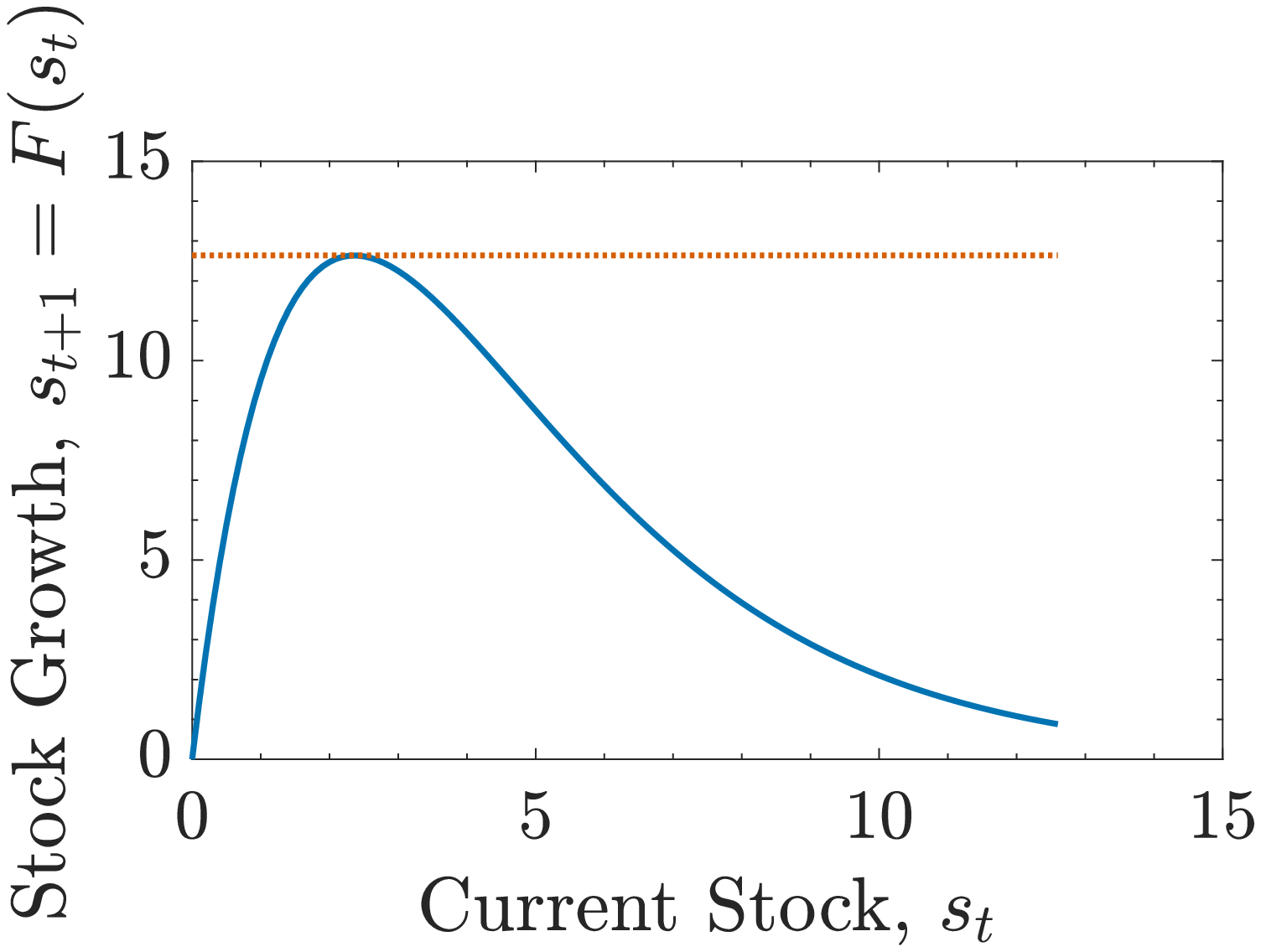}
		\caption{$r = -W_{-1}(-\frac{1}{2 e}) \approx 2.678$}
		\label{fig:model_dynamics_r_2p678}
	\end{subfigure}
	\hfill
	\begin{subfigure}[t]{0.215\linewidth}
		\centering
		\includegraphics[width = 1\linewidth, trim={3.5em 0em 0em 0em}, clip]{./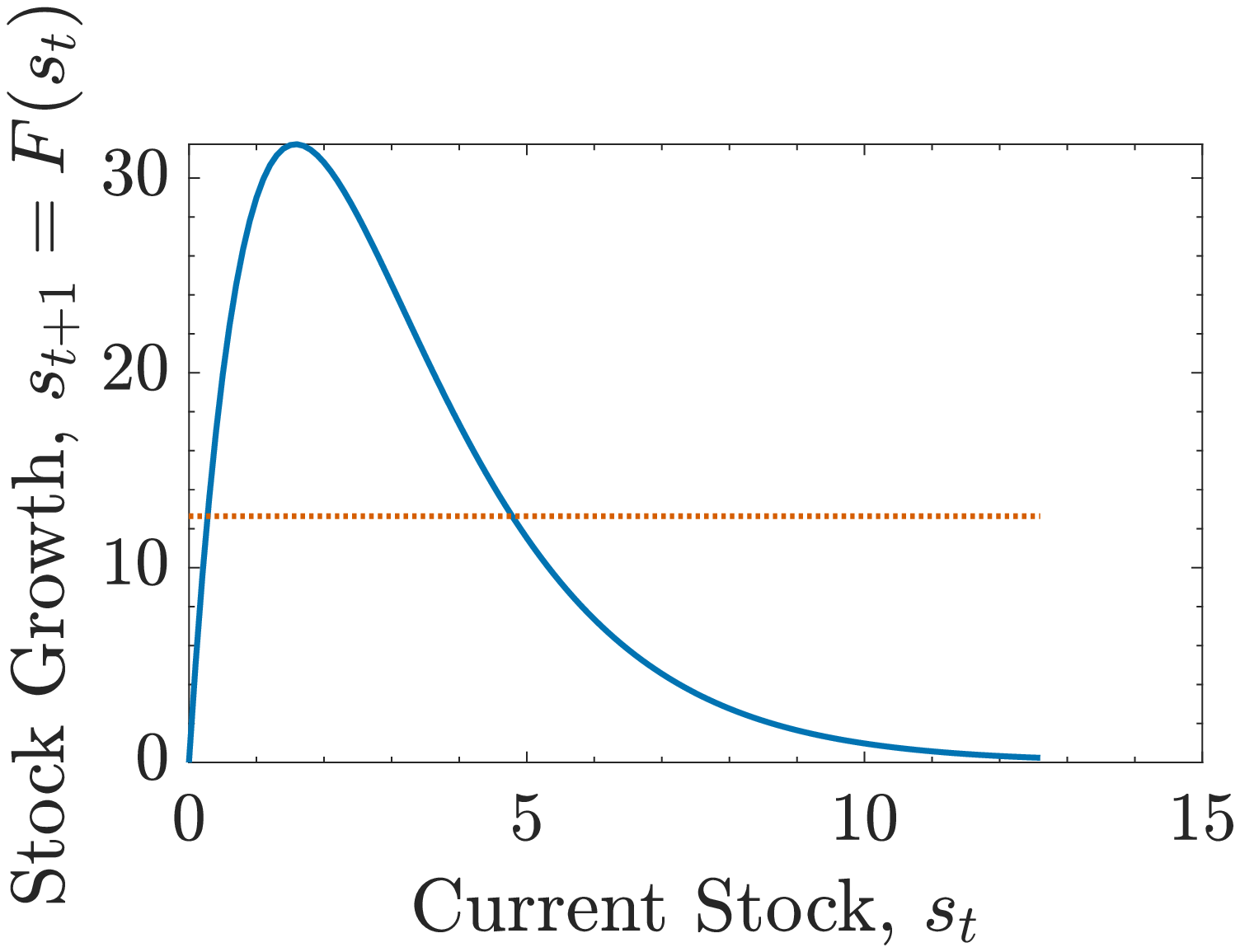}
		\caption{$r=4$}
		\label{fig:model_dynamics_r_4}
	\end{subfigure}
	\hfill
	
	\caption{Plot of the spawner-recruit function $F(\cdot)$ for various growth rates: $r = 1$, $2$, $-W_{-1}(-\frac{1}{2 e}) \approx 2.678$, and $4$ (Fig. \ref{fig:model_dynamics_r_1}, \ref{fig:model_dynamics_r_2}, \ref{fig:model_dynamics_r_2p678}, and \ref{fig:model_dynamics_r_4}, respectively). The $x$-axis denotes the current stock level ($s_t$), while the $y$-axis depicts the stock level on the next time-step, assuming no harvest (i.e., $s_{t+1} = F(s_t)$). The dashed line indicates a stock level equal to $2S_{eq}$.}
	\Description{Plot of the spawner-recruit function for various growth rates.}
	\label{fig:model_dynamics}
\end{figure*}

\section{Proof of Theorem 2.1} \label{supplement: Proof of Theorem 2.1}

For completeness, we re-state the control problem and Theorem \ref{Th: Optimal Harvesting Strategy}. We want to find a piecewise continuous control $E_t$, so as to maximize the total revenue for a given episode duration $T$ (Eq. \ref{Eq: Cumulative Revenue proof}, where $U_t(E_t)$ is the cumulative revenue at time-step $t$, given by Eq. \ref{Eq: Utility Expression}). The maximization problem can be solved using Optimal Control Theory \cite{ding2010introduction_optimalcontrol,lenhart2007optimal}.

\begin{align}
\begin{split} \label{Eq: Cumulative Revenue proof}
	\underset{E_t}{\max}& \underset{t = 0}{\overset{T}{\sum}} U_{t}(E_{t}) \\
	\text{subject to } s_{t+1} &= F(s_t - H(E_t,s_t))
\end{split}
\end{align}

\begin{equation} \label{Eq: Utility Expression}
	U_t(E_t) = p_t H(E_t,s_t) - c_t
\end{equation}

The optimal\footnote{`Optimal' is used in a technical sense, as the strategy that maximizes the revenue subject to the model equations, and it does not carry any moralistic implications.} control is given by the following theorem:

\begin{customthm}{2.1}
	The optimal control variables $E^*_t$ that solves the maximization problem of Eq. \ref{Eq: Cumulative Revenue proof} given the model dynamics described in Section \ref{Fishery Model} is given by the following equation, where $\lambda_t$ are the adjoint variables of the Hamiltonians:

	\begin{equation*} \label{Eq: theorem optimal lambda proof}
		E^*_{t+1} =
		\begin{cases}
			E_{max} ,& \text{if } (p_{t+1}-\lambda_{t+1})q(F(s_t - H(E_t, s_t))) \geq 0 \\
			0  ,& \text{if } (p_{t+1}-\lambda_{t+1})q(F(s_t - H(E_t, s_t))) < 0 \\
		\end{cases}
	\end{equation*}
\end{customthm}

\begin{proof}
In order to decouple the state ($s_t$, the current resource stock) and the control ($E_t$)\footnote{In accordance to the literature on Optimal Control Theory \cite{lenhart2007optimal}, `state' in the context of the proof refers to the variable describing the the behavior of the underlying dynamical system, and `control' refers to the input function used to steer the state of the system.} and simplify the calculations, we resort to a change of variables. We define the new state $w_t$ as the remaining stock after harvest at time-step $t$:

\begin{equation*} \label{Eq: change of state variable - 1}
	w_t \triangleq s_t - H(E_t, s_t)
\end{equation*}

Therefore,

\begin{equation} \label{Eq: change of state variable - 2}
	s_{t+1}=w_{t+1}+H(E_{t+1},s_{t+1})
\end{equation}

and

\begin{equation} \label{Eq: change of state variable - 2.5}
	s_{t+1}=F(s_t - H(E_t,s_t))=F(w_t)
\end{equation}

Using Eq. \ref{Eq: change of state variable - 2} and \ref{Eq: change of state variable - 2.5}, we can write the new state equation as:

\begin{equation} \label{Eq: change of state variable - 3}
	w_{t+1}=F(w_t)-H(E_{t+1},F(w_t))
\end{equation}

In the current form of the state equation (Eq. \ref{Eq: change of state variable - 3}), the harvested resources appear outside the nonlinear growth function $F(.)$, making the following analysis significantly simpler.

Under optimal control, the resource will not get depleted before the end of the horizon $T$, thus $q(s_t)E_t\leq s_t$, $\forall t < T$\footnote{Let us assume this is not the case and the optimal strategy would deplete the stock at certain time-step $T_{dep}$. That means that rewards are $0$ and the optimal $E_t$ is arbitrary for $t>T_{dep}$. Using the modified equation for the total harvest (Eq. \ref{Eq: Total harvest modified}), we allow $H(E_t,s_t)>s_t$ or $s_t-H(E_t,s_t)<0$. This would lead to $s_{t+1}=F(s_t-H(E_t,s_t))<0$ $\forall t>T_{dep}$, i.e., the stock would become negative. In such a case, any positive effort would decrease the revenue (since it would result in a negative harvest), thus the optimal strategy would be to set $E_t=0$. Thus, using the modified equation for the total harvest (Eq. \ref{Eq: Total harvest modified}), does not change the optimal solution.}. We can rewrite the total harvest as:

\begin{equation} \label{Eq: Total harvest modified}
	H(E_{t+1},F(w_t))=q(F(w_t))E_{t+1}
\end{equation}

The state equation (Eq. \ref{Eq: change of state variable - 3}) becomes:

\begin{equation*} \label{Eq: change of state variable - 4}
	w_{t+1}=F(w_t)-q(F(w_t))E_{t+1}
\end{equation*}

and the optimization problem:

\begin{equation*} \label{Eq: Optimization Problem 2 - Objective}
	max \sum_{t=-1}^{T-1}(p_{t+1}q(F(w_t))E_{t+1}-c_{t+1})
\end{equation*}

\begin{equation*} \label{Eq: Optimization Problem 2 - State evolution}
	subject\ to\ w_{t+1} = F(w_t)-q(F(w_t))E_{t+1}
\end{equation*}
Let $w_{-1}=s_0=S_{eq}$. Solving the optimization problem is equivalent to finding the control that optimizes the Hamiltonians \cite{lenhart2007optimal,ding2010introduction_optimalcontrol}. Let $\lambda = (\lambda_{-1}, \lambda_{0}, \dots, \lambda_{T-1})$ denote the adjoint function. The Hamiltonian at time-step $t$ is given by:

\begin{equation} \label{Eq: Hamiltonian}
\begin{split}
	\textbf{H}_t & = p_{t+1}q(F(w_t))E_{t+1}-c_{t+1}+\lambda_{t+1}(F(w_t)-q(F(w_t))E_{t+1}) \\
	& = (p_{t+1}-\lambda_{t+1})q(F(w_t))E_{t+1}-c_{t+1}+\lambda_{t+1}F(w_t)
\end{split}
\end{equation}

The adjoint equations are given by \cite{ding2010introduction_optimalcontrol}:

\begin{equation*} \label{Eq: Hamiltonian_conditions}
\begin{split}
	\lambda_t & = \frac{\partial \mathbf{H}_t}{\partial w_t} \\
	\lambda_T & = 0 \\
	\frac{\partial \mathbf{H}_t}{\partial u_t} & = 0\ \textup{at}\ u_t=u^*_t
\end{split}
\end{equation*}

\noindent
where $u_t$ is the control input, which corresponds to $E_{t+1}$ in our formulation. The last condition corresponds to the maximization of the Hamiltonian $\textbf{H}_t$ in Eq. \ref{Eq: Hamiltonian} for all time-steps $t$ \cite[Chapter~23]{lenhart2007optimal}. In our case, Eq. \ref{Eq: Hamiltonian} is linear in $E_{t+1}$ with coefficient $(p_{t+1}-\lambda_{t+1})q(F(w_t))$. Therefore, the optimal sequence of $E_{t+1}$ that maximizes Eq. \ref{Eq: Hamiltonian} is given based on the sign of the coefficient:

 \begin{equation*} \label{Eq: Optimal harvest lambda - initial}
	E^*_{t+1} =
	\begin{cases}
		E_{max} ,& \text{if } (p_{t+1}-\lambda_{t+1})q(F(w_t)) \geq 0 \\
		0  ,& \text{if } (p_{t+1}-\lambda_{t+1})q(F(w_t)) < 0 \\
	\end{cases}
\end{equation*}

The optimal strategy therefore implements bang–bang controller, which oscillates based on adjoint variable values $\lambda_{t}$, current state $w_t$, and price $p_t$.

\end{proof}

\section{Growth Rate} \label{supplement: Growth Rate}

The growth rate, $r$, plays a significant role in the stability of the stock dynamics. A high growth rate -- and subsequently high population density -- can even lead to the extinction of the population due to the collapse in behavior from overcrowding (a phenomenon known as `behavioral sink' \cite{calhoun1962population,doi:10.1177/00359157730661P202}). This is reflected by the spawner-recruit function (Eq. \ref{Eq: Spawner-recruit function}) in our model. As depicted in Fig. \ref{fig:model_dynamics}, the higher the growth rate, $r$, the more skewed the stock dynamics. Specifically, Fig. \ref{fig:model_dynamics} plots the spawner-recruit function $F(\cdot)$ for various growth rates: $r = 1$, $2$, $-W_{-1}(-\frac{1}{2 e}) \approx 2.678$, and $4$ (Fig. \ref{fig:model_dynamics_r_1}, \ref{fig:model_dynamics_r_2}, \ref{fig:model_dynamics_r_2p678}, and \ref{fig:model_dynamics_r_4}, respectively). The $x$-axis denotes the current stock level ($s_t$), while the $y$-axis depicts the stock level on the next time-step, assuming no harvest (i.e., $s_{t+1} = F(s_t)$). The dashed line indicates a stock level equal to $2S_{eq}$. For a growth rate of $r = 4$ for example (Fig. \ref{fig:model_dynamics_r_4}), we have a highly skewed growth model that can lead to the depletion of the resource (high stock values ($s_t$) result to $s_{t+1} < \delta \rightarrow 0$, i.e., permanent depletion of the resource). For this reason, we want an unskewed growth model, specifically we want the stock to remain below two times the equilibrium stock point, i.e., $s_{t+1} \leq 2 S_{eq}$\footnote{This limit is imposed by the chosen parameters of the model equations.}. For this reason, we need to bound the growth rate according to the following theorem:

\begin{theorem}
	For a continuous resource governed by the dynamics of Section \ref{Fishery Model}, the stock value does not exceed the limit of $2 S_{eq}$, if $r \in [-W(-1/(2 e)), -W_{-1}(-1/(2 e))] \approx [0.232, 2.678]$, where $W_k(\cdot)$ is the Lambert $W$ function.
\end{theorem}

\begin{proof}
	Let $x \triangleq s_t \leq 2 S_{eq}$ for a time-step $t$. We want $s_{t+1} \leq 2 S_{eq}$, thus we need to bound the maximum value of the spawner-recruit function, $F(x)$ (Eq. \ref{Eq: Spawner-recruit function}).

	Taking the derivative:

	\begin{equation*}
		\frac{\partial}{\partial x} F(x) = e^{r \left( 1 - \frac{x}{S_{eq}} \right)} \left( 1 - \frac{r x}{S_{eq}} \right)
	\end{equation*}
	We have that:

	\begin{equation*}
		\frac{\partial}{\partial x} F(x) = 0 \Rightarrow x = \frac{S_{eq}}{r}, r \neq 0 \text{ and } S_{eq} \neq 0
	\end{equation*}
	Thus the maximum value is:

	\begin{equation*}
		F\left(\frac{S_{eq}}{r}\right) = \frac{S_{eq}}{r} e^{(r - 1)}
	\end{equation*}
	We want to bound the maximum value: 

	\begin{equation*}
	\begin{split}
		F\left(\frac{S_{eq}}{r}\right) \leq 2 S_{eq} &\Rightarrow \frac{e^{(r - 1)}}{r} \leq 2 \Rightarrow e^r - 2er \leq 0 \\
		&\Rightarrow -W\left(-\frac{1}{2 e}\right) \leq r \leq -W_{-1}\left(-\frac{1}{2 e}\right)
	\end{split}
	\end{equation*}

	\noindent
	where $W_k(\cdot)$ is the Lambert $W$ function. $-W(-1/(2 e)) \approx 0.232$ and $-W_{-1}(-1/(2 e)) \approx 2.678$.
\end{proof}

\section{Modeling Details} \label{supplement: Modeling Details}

\subsection{Agent Architecture Details}

Recent work has demonstrated that code-level optimizations play an important role in performance, both in terms of achieved reward and underlying algorithmic behavior \cite{Engstrom2020Implementation}. To minimize those sources of stochasticity -- and given that the focus of this work is in the performance of the introduced technique and not of the training algorithm -- we opted to use RLlib\footnote{RLlib (\url{https://docs.ray.io/en/latest/rllib.html}) is an open-source library on top of Ray (\url{https://docs.ray.io/en/latest/index.html}) for Multi-Agent Deep Reinforcement Learning \cite{Liang2017Ray}.} as our implementation framework. Each agent uses a two-layer ($64$ neurons each) feed-forward neural network for the policy approximation. The policies are trained using the Proximal Policy Optimization (PPO) algorithm \cite{DBLP:journals/corr/SchulmanWDRK17}. All the hyper-parameters were left to the default values specified in Ray and RLlib\footnote{See \url{https://docs.ray.io/en/latest/rllib-algorithms.html\#ppo}.}. For completeness, Table \ref{tab: List of hyper-parameters} presents a list of the most relevant of them.

\begin{table}[t!]
\centering
\caption{List of hyper-parameters.}
\label{tab: List of hyper-parameters}
\begin{tabular}{@{}lc@{}}
\toprule
\textbf{Parameter}                & \textbf{Value} \\ \midrule
Learning Rate ($\alpha$)          & 0.0001         \\
Clipping Parameter                & 0.3            \\
Value Function Clipping Parameter & 10.0           \\
KL Target                         & 0.01           \\
Discount Factor ($\gamma$)        & 0.99           \\
GAE Parameter Lambda              & 1.0            \\
Value Function Loss Coefficient   & 1.0            \\
Entropy Coefficient               & 0.0            \\ \hline   
\end{tabular}%
\end{table}

\subsection{Introduced Signal: Implementation Details}

The introduced signal was encoded as a $G$-dimensional one-hot vector of fixed size, in which the high bit is shifted periodically. In particular, its value at index $i$ at time-step $t$ is given by:

\begin{equation} \label{Eq: Signal}
	\begin{cases}
		1 & \text{if } mod(t-t_{init},\ G)= i \\
		0 & \text{otherwise} \\
	\end{cases}
\end{equation}
\noindent
where $t_{init}$ is the random offset determined at the beginning of each episode in order to avoid learning any bias towards certain signal values.

\subsection{Causal Influence of Communication (CIC): Implementation Details} \label{supplement: Causal Influence of Communication}

The Causal Influence of Communication (CIC) \cite{lowe2019pitfalls} estimates the mutual information between the signal and the agent's action. The mutual information between two random variables $\tilde{X}$ and $\tilde{Y}$ is defined as the reduction of uncertainty (measured in terms of entropy $H_S(\cdot)$) in the value of $\tilde{X}$ with the observation of $\tilde{Y}$:

\begin{equation*}
	I(\tilde{X},\tilde{Y})=
	I(\tilde{Y},\tilde{X})=
	H_S(\tilde{X})-H_S(\tilde{X}|\tilde{Y})=
	E \left \{  
		log \left (
			\frac{
			P_{\tilde{X},\tilde{Y}}(\tilde{x},\tilde{y})}
			{P_{\tilde{X}}(\tilde{x})P_{\tilde{Y}}(\tilde{y})} 
		\right ) 
	\right \}
\end{equation*}

The pseudo-code for calculating the CIC for a single agent is presented in Alg. \ref{algo: CIC}. Note that the CIC implementation in \cite{lowe2019pitfalls} considers a multi-dimensional, one-hot, discrete action space with accessible probabilities for every action, while in our case we have a single, continuous action (specifically, the effort $\epsilon_{n,t}$). To solve this problem, we discretize our action space into $N_{bins}$ intervals between minimum ($\mathcal{E}_{min} = 0$) and maximum ($\mathcal{E}_{max} = 1$) effort values, and each interval is assumed to correspond to a single discrete action. Let $a_i$ denote the event of an action $\epsilon_{n,t}$ belonging to interval $i$. To calculate the CIC value, we start by generating $N_{states}$ random `partial' states (i.e., without signal), $\sigma_{-g}$, which are then concatenated with each possible signal value to obtain a `complete' state, $\sigma = [ g_j, \sigma_{-g}]$ (Lines 5 - 7 of Alg. \ref{algo: CIC}). Then, we estimate the probability of an action given a signal value, $p(a_i|g_j)$, by generating $N_{samples}$ actions from our policy given the `complete' state ($\pi(\sigma)$), and normalizing the number of instances in which the action belongs to a particular bin with the total number of samples. The remaining aspects of the calculation are the same as in the original implementation. In our calculations we used $N_{states}=N_{samples}=100$.

\begin{algorithm}[t!]
\SetAlgoLined

\textbf{input:} Agent policy $\pi(\cdot)$

$p(g_j) = \frac{1}{G}$ for all possible signals
 
Discretize the action space $[0, \mathcal{E}_{max}]$ into $N_{bins}$ intervals.
 
\For{i=1 to $N_{states}$}{
	Generate a state without a signal $\sigma_{-g}$ randomly
	
	\For{all possible signals $g_j$}{
		Generate agent observation $\sigma = [ g_j, \sigma_{-g}]$
		
		Estimate $p(a_i|g_j)$ by sampling $N_{samples}$ actions from $\pi(\sigma)$
		
		$p(a_i,g_j) = p(a_i|g_j)p(g_j)$
	}
	$p(a_i) = \sum_{j} p(a,g_j)$
	
	$CIC += \frac{1}{N_{states}}\sum_{a_i|p(a_i)\neq 0,\ g_j}^{}p(a,g_j)log(\frac{p(a,g_j)}{p(a)p(g_j)})$
}
\caption{CIC Implementation (based on \cite{lowe2019pitfalls})}
\label{algo: CIC}
\end{algorithm}

\subsection{Fairness Metrics}

We also evaluated the fairness of the final allocation, to ensure that agents are not being exploited by the introduction of the signal. We used two of most established fairness metrics: the Jain index \cite{DBLP:journals/corr/cs-NI-9809099} and the Gini coefficient \cite{gini1912variabilita}:

\textbf{(a)}  \emph{The Jain index} \cite{DBLP:journals/corr/cs-NI-9809099}: Widely used in network engineering to determine whether users or applications receive a fair share of system resources. It exhibits a lot of desirable properties such as: population size independence, continuity, scale and metric independence, and boundedness. For an allocation of $N$ agents, such that the $n^{\text{th}}$ agent is alloted $x_n$, the Jain index is given by Eq. \ref{eq: Jain Index}. $\mathds{J}(\mathbf{x}) \in [0, 1]$. An allocation $\mathbf{x} = (x_1, \dots, x_N) ^\top$ is considered fair, iff $\mathds{J}(\mathbf{x}) = 1$.

\begin{equation} \label{eq: Jain Index}
	\mathds{J}(\mathbf{x}) = \frac{\left(\underset{n = 1}{\overset{N}{\sum}} x_n\right) ^ 2}{N \underset{n = 1}{\overset{N}{\sum}}  x_n ^ 2}
\end{equation}

\textbf{(b)} \emph{The Gini coefficient} \cite{gini1912variabilita}: One of the most commonly used measures of inequality by economists intended to represent the wealth distribution of a population of a nation. For an allocation game of $N$ agents, such that the $n^{\text{th}}$ agent is alloted $x_n$, the Gini coefficient is given by Eq. \ref{eq: Gini coefficient}. $\mathds{G}(\mathbf{x}) \geq 0$. A Gini coefficient of zero expresses perfect equality, i.e., an allocation is fair iff $\mathds{G}(\mathbf{x}) = 0$.

\begin{equation} \label{eq: Gini coefficient}
	\mathds{G}(\mathbf{x}) = \frac{\underset{n = 1}{\overset{N}{\sum}} \underset{n' = 1}{\overset{N}{\sum}} \left| x_n - x_{n'} \right|}{2 N \underset{n = 1}{\overset{N}{\sum}}  x_n}
\end{equation}

Both metrics showed that learning both with and without the signal results in fair allocations, with no significant change with the introduction of the signal (see Section \ref{supplement: Simulation Results}).

\section{Simulation Results in Detail} \label{supplement: Simulation Results}

In this section we provide numerical values of the simulation results presented in the main text. Specifically:

Tables (\ref{tab: Social Welfare} and \ref{tab: Social Welfare relative difference and p values}), (\ref{tab: Episode Length} and \ref{tab: Episode Length relative difference and p values}), (\ref{tab: Training Time} and \ref{tab: Training Time relative difference and p values}), (\ref{tab: Jain Index} and \ref{tab: Jain Index relative difference and p values}), and (\ref{tab: Gini Coefficient} and \ref{tab: Gini Coefficient relative difference and p values}) include the results (absolute values with and without the introduced signal, relative difference, and Student's T-test p-values) on social welfare, episode lengths (in time-steps), training time (in number of episodes), Jain index, and Gini coefficient, respectively.

Table \ref{tab: CIC values} presents the CIC values.

Tables \ref{tab: Varying Signal Size Social Welfare}, \ref{tab: Varying Signal Size Episode Length}, \ref{tab: Varying Signal Size Training Time}, \ref{tab: Varying Signal Size Jain Index}, and \ref{tab: Varying Signal Size Gini Coefficient} present the results on the aforementioned metrics for varying signal size, $G = \{ 1, \frac{N}{2}, 23, N, 41, \frac{3N}{2} \}$.

We also ran simulations with higher growth rate, specifically $r = 2$. The results can be found in Tables \ref{tab: Growth Rate 2}, and \ref{tab: Growth Rate 2 relative difference and p values}. Every resource has a natural upper limit on the size of the population it can sustain. Fig. \ref{fig:episode_lengths} shows that as the number of agents grow, we reach sustainable strategies at higher equilibrium stock multipliers ($M_s$). Thus, we expect that as we increase the growth rate, the effect of the signal will be even more pronounced in larger populations ($N$) -- which is corroborated by the aforementioned Tables. The environment's ability to sustain a population also affects the counts of `active' agents of Section \ref{Access Rate}. As we can see in Fig. \ref{fig:episode_lengths}, for a growth rate of $r = 1$, the resource is too scarce, thus, even with the addition of the signal, the first sustainable strategy is reached at $M_s = 0.9$. This is a high equilibrium stock multiplier, close to the limit of sustainable harvesting. As a result, the number of `active' agents is naturally really high because they do not need to harvest in turns. By increasing the growth rate to $r = 2$, we have an environment that can sustain larger populations. Therefore, the first strategy that does not result to an immediate depletion is reached much earlier, and we can observe the emergence of a temporal convention (see Table \ref{tab: Access Rate}).

Finally, Table \ref{tab: Access Rate} shows the average number of agents in each bin -- $\epsilon \in [0 - 0.33)$ (`idle'), $[0.33 - 0.66)$ (`moderate'), and $[0.66 - 1]$ (`active') -- starting from the first equilibrium stock multiplier ($M_s$) where a non-depleting strategy was achieved in each setting.

All reported results are the average values over 8 trials. Note also that, as was specified in Section \ref{Environment Settings}, $K = \frac{S^{\scriptscriptstyle N, r}_{\scriptscriptstyle LSH}}{N} = \frac{e^r \mathcal{E}_{max}}{2 (e^r - 1)}$. Therefore, in the settings where the growth rate is $r = 1$, $K \approx 0.79$, while in the settings where $r = 2$, $K \approx 0.58$.

\begin{table*}[t!]
\centering
\caption{Social Welfare\\
Results (averaged over 8 trials) for increasing population size ($N \in [2, 64]$), with ($G=N$) and without ($G=1$) the introduced signal, for environments of decreasing difficulty (increasing $S_{eq} \propto M_s$).}
\label{tab: Social Welfare}
\resizebox{\textwidth}{!}{%
\begin{tabular}{@{}r|cccccccccccc@{}}
\toprule
&
\multicolumn{2}{c}{$N=$2} &
\multicolumn{2}{c}{$N=4$} &
\multicolumn{2}{c}{$N=8$} &
\multicolumn{2}{c}{$N=16$} &
\multicolumn{2}{c}{$N=32$} &
\multicolumn{2}{c}{$N=64$} \\
&
$G=1$ &
$G=N$ &
$G=1$ &
$G=N$ &
$G=1$ &
$G=N$ &
$G=1$ &
$G=N$ &
$G=1$ &
$G=N$ &
$G=1$ &
$G=N$ \\ \midrule
\textbf{$M_s = 0.2$} & 0.36   & 0.36   & 0.76   & 0.76   & 1.50    & 1.51    & 2.92    & 2.92    & 5.65    & 5.65    & 10.96   & 10.96   \\
\textbf{$M_s = 0.3$} & 0.60   & 0.60   & 1.28   & 1.28   & 2.64    & 2.64    & 5.38    & 4.81    & 8.26    & 8.61    & 16.38   & 15.29   \\
\textbf{$M_s = 0.4$} & 3.68   & 43.46  & 2.30   & 2.89   & 4.70    & 160.05  & 5.77    & 130.37  & 10.66   & 10.73   & 20.45   & 21.10   \\
\textbf{$M_s = 0.5$} & 122.30 & 111.57 & 179.20 & 186.65 & 141.89  & 193.53  & 70.10   & 154.00  & 12.81   & 39.35   & 25.67   & 25.81   \\
\textbf{$M_s = 0.6$} & 143.23 & 135.73 & 206.14 & 232.03 & 174.52  & 215.93  & 176.30  & 157.74  & 15.20   & 130.34  & 30.37   & 33.12   \\
\textbf{$M_s = 0.7$} & 172.69 & 179.28 & 181.15 & 223.18 & 121.33  & 211.70  & 196.26  & 202.63  & 34.14   & 196.57  & 41.49   & 52.57   \\
\textbf{$M_s = 0.8$} & 176.33 & 199.11 & 200.64 & 275.35 & 200.10  & 291.44  & 247.90  & 304.74  & 86.95   & 316.08  & 62.28   & 80.49   \\
\textbf{$M_s = 0.9$} & 197.49 & 207.15 & 244.93 & 316.75 & 327.91  & 395.14  & 341.24  & 499.66  & 165.75  & 637.30  & 101.12  & 1517.30 \\
\textbf{$M_s = 1.0$} & 223.28 & 229.01 & 339.33 & 371.72 & 482.25  & 524.64  & 585.71  & 687.65  & 592.26  & 1717.98 & 297.34  & 5350.52 \\
\textbf{$M_s = 1.1$} & 252.41 & 250.56 & 418.38 & 429.24 & 674.72  & 707.17  & 1315.68 & 1348.68 & 2611.96 & 3415.66 & 5225.12 & 8317.41 \\
\textbf{$M_s = 1.2$} & 280.61 & 278.61 & 523.86 & 532.54 & 1046.36 & 1051.10 & 2091.35 & 2120.13 & 4180.33 & 4784.53 & 8367.87 & 9900.67 \\ \hline
\end{tabular}
} \bigskip\bigskip
\end{table*}

\begin{table*}[t!]
\centering
\caption{Social Welfare (Relative difference \& p-values)\\
(i) Relative difference in Social Welfare when signal of cardinality $G=N$ is introduced ($(\text{Result}_{G=N} - \text{Result}_{G=1}) / \text{Result}_{G=1}$, where $\text{Result}_{G=X}$ denotes the achieved result using a signal of cardinality $X$), and \\
(ii) Student's T-test p-values, \\
for varying population size ($N \in [2, 64]$) and environments of decreasing difficulty (increasing $S_{eq} \propto M_s$).}
\label{tab: Social Welfare relative difference and p values}
\resizebox{\textwidth}{!}{%
\begin{tabular}{@{}r|cccccccccccc@{}}
\toprule
&
\multicolumn{2}{c}{$N=$2} &
\multicolumn{2}{c}{$N=4$} &
\multicolumn{2}{c}{$N=8$} &
\multicolumn{2}{c}{$N=16$} &
\multicolumn{2}{c}{$N=32$} &
\multicolumn{2}{c}{$N=64$} \\
&
(\%) &
p 	 &
(\%) &
p 	 &
(\%) &
p 	 &
(\%) &
p 	 &
(\%) &
p 	 &
(\%) &
p 	 \\ \midrule
\textbf{$M_s = 0.2$} & 0.0    & 0.87876 & 0.2  & 0.57508 & 0.2    & 0.47454 & -0.1   & 0.47064 & 0.1   & 0.36608 & 0.0    & 0.93297 \\
\textbf{$M_s = 0.3$} & -0.3   & 0.48043 & 0.2  & 0.57516 & 0.1    & 0.82779 & -10.5  & 0.02451 & 4.2   & 0.00000 & -6.6   & 0.00000 \\
\textbf{$M_s = 0.4$} & 1082.1 & 0.01102 & 25.8 & 0.16861 & 3305.9 & 0.00000 & 2158.1 & 0.00000 & 0.7   & 0.74501 & 3.2    & 0.01327 \\
\textbf{$M_s = 0.5$} & -8.8   & 0.00453 & 4.2  & 0.42501 & 36.4   & 0.00261 & 119.7  & 0.00566 & 207.2 & 0.14710 & 0.6    & 0.48706 \\
\textbf{$M_s = 0.6$} & -5.2   & 0.01401 & 12.6 & 0.01795 & 23.7   & 0.00033 & -10.5  & 0.48519 & 757.6 & 0.00001 & 9.1    & 0.00000 \\
\textbf{$M_s = 0.7$} & 3.8    & 0.00161 & 23.2 & 0.00000 & 74.5   & 0.00000 & 3.2    & 0.83672 & 475.7 & 0.00000 & 26.7   & 0.00000 \\
\textbf{$M_s = 0.8$} & 12.9   & 0.00032 & 37.2 & 0.00013 & 45.6   & 0.00105 & 22.9   & 0.00886 & 263.5 & 0.00000 & 29.2   & 0.00000 \\
\textbf{$M_s = 0.9$} & 4.9    & 0.06719 & 29.3 & 0.00008 & 20.5   & 0.12549 & 46.4   & 0.00355 & 284.5 & 0.00000 & 1400.5 & 0.00000 \\
\textbf{$M_s = 1.0$} & 2.6    & 0.03517 & 9.5  & 0.02099 & 8.8    & 0.01232 & 17.4   & 0.00014 & 190.1 & 0.00000 & 1699.5 & 0.00000 \\
\textbf{$M_s = 1.1$} & -0.7   & 0.41309 & 2.6  & 0.12172 & 4.8    & 0.01402 & 2.5    & 0.00002 & 30.8  & 0.00000 & 59.2   & 0.00000 \\
\textbf{$M_s = 1.2$} & -0.7   & 0.56215 & 1.7  & 0.00477 & 0.5    & 0.09644 & 1.4    & 0.00000 & 14.5  & 0.00000 & 18.3   & 0.00000 \\ \hline
\end{tabular}
}
\end{table*}


\begin{table*}[t!]
\centering
\caption{Episode Length (\#time-steps)\\
Results (averaged over 8 trials) for increasing population size ($N \in [2, 64]$), with ($G=N$) and without ($G=1$) the introduced signal, for environments of decreasing difficulty (increasing $S_{eq} \propto M_s$).}
\label{tab: Episode Length}
\resizebox{\textwidth}{!}{%
\begin{tabular}{@{}r|cccccccccccc@{}}
\toprule
&
\multicolumn{2}{c}{$N=$2} &
\multicolumn{2}{c}{$N=4$} &
\multicolumn{2}{c}{$N=8$} &
\multicolumn{2}{c}{$N=16$} &
\multicolumn{2}{c}{$N=32$} &
\multicolumn{2}{c}{$N=64$} \\
&
$G=1$ &
$G=N$ &
$G=1$ &
$G=N$ &
$G=1$ &
$G=N$ &
$G=1$ &
$G=N$ &
$G=1$ &
$G=N$ &
$G=1$ &
$G=N$ \\ \midrule
\textbf{$M_s = 0.2$} & 2.04   & 2.04   & 2.14   & 2.14   & 2.05   & 2.06   & 2.01   & 2.01   & 1.99   & 2.00   & 1.96   & 1.96   \\
\textbf{$M_s = 0.3$} & 2.75   & 2.72   & 2.83   & 2.82   & 3.63   & 3.66   & 5.17   & 3.61   & 1.98   & 2.67   & 2.20   & 1.19   \\
\textbf{$M_s = 0.4$} & 25.17  & 276.52 & 6.50   & 8.50   & 11.26  & 491.08 & 6.15   & 474.22 & 2.25   & 1.95   & 1.44   & 2.07   \\
\textbf{$M_s = 0.5$} & 499.79 & 499.94 & 463.01 & 500.00 & 481.93 & 500.00 & 240.27 & 496.08 & 1.90   & 119.85 & 2.62   & 2.32   \\
\textbf{$M_s = 0.6$} & 499.98 & 499.96 & 498.51 & 499.49 & 499.90 & 499.72 & 499.85 & 495.69 & 2.36   & 438.24 & 2.72   & 6.78   \\
\textbf{$M_s = 0.7$} & 500.00 & 500.00 & 499.24 & 500.00 & 499.49 & 500.00 & 500.00 & 500.00 & 70.78  & 500.00 & 10.00  & 31.33  \\
\textbf{$M_s = 0.8$} & 500.00 & 500.00 & 500.00 & 500.00 & 500.00 & 500.00 & 500.00 & 500.00 & 200.63 & 500.00 & 22.00  & 66.68  \\
\textbf{$M_s = 0.9$} & 500.00 & 500.00 & 500.00 & 500.00 & 500.00 & 500.00 & 500.00 & 500.00 & 275.00 & 500.00 & 53.00  & 500.00 \\
\textbf{$M_s = 1.0$} & 500.00 & 500.00 & 500.00 & 500.00 & 500.00 & 500.00 & 500.00 & 500.00 & 500.00 & 500.00 & 500.00 & 500.00 \\
\textbf{$M_s = 1.1$} & 500.00 & 500.00 & 500.00 & 500.00 & 500.00 & 500.00 & 500.00 & 500.00 & 500.00 & 500.00 & 500.00 & 500.00 \\
\textbf{$M_s = 1.2$} & 500.00 & 500.00 & 500.00 & 500.00 & 500.00 & 500.00 & 500.00 & 500.00 & 500.00 & 500.00 & 500.00 & 500.00 \\ \hline
\end{tabular}
} \bigskip\bigskip
\end{table*}

\begin{table*}[t!]
\centering
\caption{Episode Length (Relative difference \& p-values)\\
(i) Relative difference in Episode Length when signal of cardinality $G=N$ is introduced ($(\text{Result}_{G=N} - \text{Result}_{G=1}) / \text{Result}_{G=1}$, where $\text{Result}_{G=X}$ denotes the achieved result using a signal of cardinality $X$), and \\
(ii) Student's T-test p-values, \\
for varying population size ($N \in [2, 64]$) and environments of decreasing difficulty (increasing $S_{eq} \propto M_s$).\\
NaN values in the p-values column are due to having only a single data point; both cases (with and without the signal) have the same episode length in all the trials.}
\label{tab: Episode Length relative difference and p values}
\resizebox{\textwidth}{!}{%
\begin{tabular}{@{}r|cccccccccccc@{}}
\toprule
&
\multicolumn{2}{c}{$N=$2} &
\multicolumn{2}{c}{$N=4$} &
\multicolumn{2}{c}{$N=8$} &
\multicolumn{2}{c}{$N=16$} &
\multicolumn{2}{c}{$N=32$} &
\multicolumn{2}{c}{$N=64$} \\
&
(\%) &
p 	 &
(\%) &
p 	 &
(\%) &
p 	 &
(\%) &
p 	 &
(\%) &
p 	 &
(\%) &
p 	 \\ \midrule
\textbf{$M_s = 0.2$} & 0.0   & 0.99280 & 0.0  & 0.98122 & 0.7    & 0.43808 & -0.5   & 0.34488 & 0.6     & 0.32379 & 0.0   & 0.97624 \\
\textbf{$M_s = 0.3$} & -0.9  & 0.37060 & -0.2 & 0.83869 & 0.9    & 0.48086 & -30.2  & 0.02482 & 34.7    & 0.00000 & -46.0 & 0.00000 \\
\textbf{$M_s = 0.4$} & 998.6 & 0.00924 & 30.7 & 0.16075 & 4261.1 & 0.00000 & 7610.6 & 0.00000 & -13.3   & 0.52170 & 44.2  & 0.04255 \\
\textbf{$M_s = 0.5$} & 0.0   & 0.36593 & 8.0  & 0.04688 & 3.8    & 0.20338 & 106.5  & 0.01043 & 6209.4  & 0.14849 & -11.8 & 0.56968 \\
\textbf{$M_s = 0.6$} & 0.0   & 0.58271 & 0.2  & 0.36518 & 0.0    & 0.56152 & -0.8   & 0.02847 & 18503.0 & 0.00001 & 149.4 & 0.00001 \\
\textbf{$M_s = 0.7$} & 0.0   & NaN     & 0.2  & 0.13837 & 0.1    & 0.10198 & 0.0    & NaN     & 606.4   & 0.00001 & 213.3 & 0.00000 \\
\textbf{$M_s = 0.8$} & 0.0   & NaN     & 0.0  & NaN     & 0.0    & NaN     & 0.0    & NaN     & 149.2   & 0.00418 & 203.1 & 0.00000 \\
\textbf{$M_s = 0.9$} & 0.0   & NaN     & 0.0  & NaN     & 0.0    & NaN     & 0.0    & NaN     & 81.8    & 0.01919 & 843.4 & 0.00000 \\
\textbf{$M_s = 1.0$} & 0.0   & NaN     & 0.0  & NaN     & 0.0    & NaN     & 0.0    & NaN     & 0.0     & NaN     & 0.0   & NaN     \\
\textbf{$M_s = 1.1$} & 0.0   & NaN     & 0.0  & NaN     & 0.0    & NaN     & 0.0    & NaN     & 0.0     & NaN     & 0.0   & NaN     \\
\textbf{$M_s = 1.2$} & 0.0   & NaN     & 0.0  & NaN     & 0.0    & NaN     & 0.0    & NaN     & 0.0     & NaN     & 0.0   & NaN     \\ \hline
\end{tabular}
}
\end{table*}


\begin{table*}[t!]
\centering
\caption{Training Time (\#episodes)\\
Results (averaged over 8 trials) for increasing population size ($N \in [2, 64]$), with ($G=N$) and without ($G=1$) the introduced signal, for environments of decreasing difficulty (increasing $S_{eq} \propto M_s$).}
\label{tab: Training Time}
\resizebox{\textwidth}{!}{%
\begin{tabular}{@{}r|cccccccccccc@{}}
\toprule
&
\multicolumn{2}{c}{$N=$2} &
\multicolumn{2}{c}{$N=4$} &
\multicolumn{2}{c}{$N=8$} &
\multicolumn{2}{c}{$N=16$} &
\multicolumn{2}{c}{$N=32$} &
\multicolumn{2}{c}{$N=64$} \\
&
$G=1$ &
$G=N$ &
$G=1$ &
$G=N$ &
$G=1$ &
$G=N$ &
$G=1$ &
$G=N$ &
$G=1$ &
$G=N$ &
$G=1$ &
$G=N$ \\ \midrule
\textbf{$M_s = 0.2$} & 5000.00 & 5000.00 & 5000.00 & 5000.00 & 5000.00 & 5000.00 & 5000.00 & 5000.00 & 5000.00 & 5000.00 & 5000.00 & 5000.00 \\
\textbf{$M_s = 0.3$} & 5000.00 & 5000.00 & 5000.00 & 5000.00 & 5000.00 & 5000.00 & 5000.00 & 5000.00 & 5000.00 & 5000.00 & 5000.00 & 5000.00 \\
\textbf{$M_s = 0.4$} & 5000.00 & 5000.00 & 5000.00 & 5000.00 & 5000.00 & 4990.13 & 5000.00 & 4960.13 & 5000.00 & 5000.00 & 5000.00 & 5000.00 \\
\textbf{$M_s = 0.5$} & 4955.38 & 4340.63 & 4971.25 & 3408.75 & 5000.00 & 3389.50 & 5000.00 & 4847.88 & 5000.00 & 5000.00 & 5000.00 & 5000.00 \\
\textbf{$M_s = 0.6$} & 3037.38 & 2268.00 & 3822.75 & 2013.75 & 3903.00 & 3051.63 & 4583.13 & 2797.50 & 5000.00 & 4081.13 & 5000.00 & 5000.00 \\
\textbf{$M_s = 0.7$} & 1536.13 & 1008.13 & 2478.63 & 1328.00 & 3104.50 & 1880.00 & 3213.00 & 2680.00 & 4937.00 & 3026.00 & 5000.00 & 5000.00 \\
\textbf{$M_s = 0.8$} & 1004.00 & 1124.00 & 1968.00 & 1296.00 & 3088.00 & 2064.00 & 3778.00 & 2116.00 & 4887.00 & 2551.00 & 5000.00 & 5000.00 \\
\textbf{$M_s = 0.9$} & 1248.00 & 936.00  & 2168.00 & 1416.00 & 3025.00 & 1412.00 & 3810.00 & 2100.00 & 4536.00 & 2896.00 & 5000.00 & 2581.00 \\
\textbf{$M_s = 1.0$} & 864.00  & 776.00  & 1692.00 & 1252.00 & 2116.00 & 1612.00 & 2740.00 & 1788.00 & 3393.00 & 2334.00 & 1484.00 & 1965.00 \\
\textbf{$M_s = 1.1$} & 684.00  & 800.00  & 1028.00 & 1376.00 & 1284.00 & 1436.00 & 1164.00 & 1536.00 & 1218.00 & 1832.00 & 1193.00 & 1172.00 \\
\textbf{$M_s = 1.2$} & 840.00  & 780.00  & 1132.00 & 1040.00 & 1004.00 & 1212.00 & 932.00  & 1116.00 & 949.00  & 844.00  & 897.00  & 816.00  \\ \hline
\end{tabular}
} \bigskip\bigskip
\end{table*}

\begin{table*}[t!]
\centering
\caption{Training Time (Relative difference \& p-values)\\
(i) Relative difference in Training Time when signal of cardinality $G=N$ is introduced ($(\text{Result}_{G=N} - \text{Result}_{G=1}) / \text{Result}_{G=1}$, where $\text{Result}_{G=X}$ denotes the achieved result using a signal of cardinality $X$), and \\
(ii) Student's T-test p-values, \\
for varying population size ($N \in [2, 64]$) and environments of decreasing difficulty (increasing $S_{eq} \propto M_s$).}
\label{tab: Training Time relative difference and p values}
\resizebox{\textwidth}{!}{%
\begin{tabular}{@{}r|cccccccccccc@{}}
\toprule
&
\multicolumn{2}{c}{$N=$2} &
\multicolumn{2}{c}{$N=4$} &
\multicolumn{2}{c}{$N=8$} &
\multicolumn{2}{c}{$N=16$} &
\multicolumn{2}{c}{$N=32$} &
\multicolumn{2}{c}{$N=64$} \\
&
(\%) &
p 	 &
(\%) &
p 	 &
(\%) &
p 	 &
(\%) &
p 	 &
(\%) &
p 	 &
(\%) &
p 	 \\ \midrule
\textbf{$M_s = 0.2$} & 0.0   & 0.00000 & 0.0   & 0.00000 & 0.0   & 0.00000 & 0.0   & 0.00000 & 0.0   & 0.00000 & 0.0   & 0.00000 \\
\textbf{$M_s = 0.3$} & 0.0   & 0.00000 & 0.0   & 0.00000 & 0.0   & 0.00000 & 0.0   & 0.00000 & 0.0   & 0.00000 & 0.0   & 0.00000 \\
\textbf{$M_s = 0.4$} & 0.0   & 0.00000 & 0.0   & 0.00000 & -0.2  & 0.00000 & -0.8  & 0.00000 & 0.0   & 0.00000 & 0.0   & 0.00000 \\
\textbf{$M_s = 0.5$} & -12.4 & 0.00000 & -31.4 & 0.00000 & -32.2 & 0.00000 & -3.0  & 0.00000 & 0.0   & 0.00000 & 0.0   & 0.00000 \\
\textbf{$M_s = 0.6$} & -25.3 & 0.00000 & -47.3 & 0.00006 & -21.8 & 0.00013 & -39.0 & 0.00003 & -18.4 & 0.00000 & 0.0   & 0.00000 \\
\textbf{$M_s = 0.7$} & -34.4 & 0.00000 & -46.4 & 0.02193 & -39.4 & 0.00396 & -16.6 & 0.00666 & -38.7 & 0.00000 & 0.0   & 0.00000 \\
\textbf{$M_s = 0.8$} & 12.0  & 0.01448 & -34.1 & 0.00266 & -33.2 & 0.00228 & -44.0 & 0.00128 & -47.8 & 0.00000 & 0.0   & 0.00000 \\
\textbf{$M_s = 0.9$} & -25.0 & 0.00462 & -34.7 & 0.01569 & -53.3 & 0.00583 & -44.9 & 0.00221 & -36.2 & 0.00001 & -48.4 & 0.00000 \\
\textbf{$M_s = 1.0$} & -10.2 & 0.00137 & -26.0 & 0.02400 & -23.8 & 0.01508 & -34.7 & 0.03351 & -31.2 & 0.00211 & 32.4  & 0.00000 \\
\textbf{$M_s = 1.1$} & 17.0  & 0.00000 & 33.9  & 0.00024 & 11.8  & 0.00006 & 32.0  & 0.00000 & 50.4  & 0.00001 & -1.8  & 0.00000 \\
\textbf{$M_s = 1.2$} & -7.1  & 0.00847 & -8.1  & 0.00002 & 20.7  & 0.00000 & 19.7  & 0.00000 & -11.1 & 0.00000 & -9.0  & 0.00000 \\ \hline
\end{tabular}
}
\end{table*}


\begin{table*}[t!]
\centering
\caption{Jain Index (higher is fairer)\\
Results (averaged over 8 trials) for increasing population size ($N \in [2, 64]$), with ($G=N$) and without ($G=1$) the introduced signal, for environments of decreasing difficulty (increasing $S_{eq} \propto M_s$).}
\label{tab: Jain Index}
\resizebox{\textwidth}{!}{%
\begin{tabular}{@{}r|cccccccccccc@{}}
\toprule
&
\multicolumn{2}{c}{$N=$2} &
\multicolumn{2}{c}{$N=4$} &
\multicolumn{2}{c}{$N=8$} &
\multicolumn{2}{c}{$N=16$} &
\multicolumn{2}{c}{$N=32$} &
\multicolumn{2}{c}{$N=64$} \\
&
$G=1$ &
$G=N$ &
$G=1$ &
$G=N$ &
$G=1$ &
$G=N$ &
$G=1$ &
$G=N$ &
$G=1$ &
$G=N$ &
$G=1$ &
$G=N$ \\ \midrule
\textbf{$M_s = 0.2$} & 0.99984 & 0.99973 & 0.99976 & 0.99984 & 0.99963 & 0.99966 & 0.99967 & 0.99961 & 0.99943 & 0.99947 & 0.99932 & 0.99941 \\
\textbf{$M_s = 0.3$} & 0.99973 & 0.99991 & 0.99984 & 0.99981 & 0.99967 & 0.99974 & 0.99963 & 0.99959 & 0.99958 & 0.99930 & 0.99936 & 0.99942 \\
\textbf{$M_s = 0.4$} & 0.99210 & 0.99299 & 0.99910 & 0.99846 & 0.99119 & 0.98808 & 0.98691 & 0.99370 & 0.98603 & 0.99526 & 0.99063 & 0.99648 \\
\textbf{$M_s = 0.5$} & 0.98061 & 0.98251 & 0.97149 & 0.97895 & 0.97910 & 0.98978 & 0.98376 & 0.99416 & 0.98946 & 0.99694 & 0.99489 & 0.99773 \\
\textbf{$M_s = 0.6$} & 0.98252 & 0.99911 & 0.98507 & 0.99479 & 0.97847 & 0.99246 & 0.98147 & 0.99599 & 0.99898 & 0.99628 & 0.99935 & 0.99831 \\
\textbf{$M_s = 0.7$} & 0.99317 & 0.99996 & 0.98657 & 0.99239 & 0.98647 & 0.99458 & 0.98862 & 0.99601 & 0.99635 & 0.99660 & 1.00000 & 0.99787 \\
\textbf{$M_s = 0.8$} & 0.99705 & 0.99872 & 0.98081 & 0.99269 & 0.97903 & 0.99534 & 0.98679 & 0.99645 & 0.99023 & 0.99733 & 1.00000 & 0.99864 \\
\textbf{$M_s = 0.9$} & 0.99780 & 0.99535 & 0.97744 & 0.99304 & 0.97841 & 0.99581 & 0.98236 & 0.99704 & 0.99234 & 0.99810 & 0.99999 & 0.99912 \\
\textbf{$M_s = 1.0$} & 0.98674 & 0.99740 & 0.98892 & 0.99454 & 0.99094 & 0.99715 & 0.99441 & 0.99876 & 0.99578 & 0.99924 & 1.00000 & 0.99929 \\
\textbf{$M_s = 1.1$} & 0.99620 & 0.99498 & 0.99770 & 0.99737 & 0.99990 & 0.99976 & 0.99999 & 0.99999 & 1.00000 & 0.99945 & 0.99999 & 0.99909 \\
\textbf{$M_s = 1.2$} & 0.99855 & 0.99956 & 0.99998 & 0.99993 & 0.99999 & 0.99999 & 1.00000 & 0.99998 & 0.99999 & 0.99917 & 0.99999 & 0.99876 \\ \hline
\end{tabular}
} \bigskip\bigskip
\end{table*}

\begin{table*}[t!]
\centering
\caption{Jain Index (Relative difference \& p-values)\\
(i) Relative difference in Jain Index when signal of cardinality $G=N$ is introduced ($(\text{Result}_{G=N} - \text{Result}_{G=1}) / \text{Result}_{G=1}$, where $\text{Result}_{G=X}$ denotes the achieved result using a signal of cardinality $X$), and \\
(ii) Student's T-test p-values, \\
for varying population size ($N \in [2, 64]$) and environments of decreasing difficulty (increasing $S_{eq} \propto M_s$).}
\label{tab: Jain Index relative difference and p values}
\resizebox{\textwidth}{!}{%
\begin{tabular}{@{}r|cccccccccccc@{}}
\toprule
&
\multicolumn{2}{c}{$N=$2} &
\multicolumn{2}{c}{$N=4$} &
\multicolumn{2}{c}{$N=8$} &
\multicolumn{2}{c}{$N=16$} &
\multicolumn{2}{c}{$N=32$} &
\multicolumn{2}{c}{$N=64$} \\
&
(\%) &
p 	 &
(\%) &
p 	 &
(\%) &
p 	 &
(\%) &
p 	 &
(\%) &
p 	 &
(\%) &
p 	 \\ \midrule
\textbf{$M_s = 0.2$} & 0.0  & 0.62689 & 0.0  & 0.19181 & 0.0  & 0.73995 & 0.0 & 0.43584 & 0.0  & 0.67344 & 0.0  & 0.12180 \\
\textbf{$M_s = 0.3$} & 0.0  & 0.19431 & 0.0  & 0.74755 & 0.0  & 0.40216 & 0.0 & 0.68556 & 0.0  & 0.00316 & 0.0  & 0.24335 \\
\textbf{$M_s = 0.4$} & 0.1  & 0.85540 & -0.1 & 0.47459 & -0.3 & 0.31067 & 0.7 & 0.00312 & 0.9  & 0.00014 & 0.6  & 0.00000 \\
\textbf{$M_s = 0.5$} & 0.2  & 0.88279 & 0.8  & 0.27542 & 1.1  & 0.08226 & 1.1 & 0.00027 & 0.8  & 0.00105 & 0.3  & 0.00025 \\
\textbf{$M_s = 0.6$} & 1.7  & 0.05974 & 1.0  & 0.02806 & 1.4  & 0.00501 & 1.5 & 0.00055 & -0.3 & 0.00000 & -0.1 & 0.00000 \\
\textbf{$M_s = 0.7$} & 0.7  & 0.16061 & 0.6  & 0.25648 & 0.8  & 0.00610 & 0.7 & 0.00002 & 0.0  & 0.94743 & -0.2 & 0.00000 \\
\textbf{$M_s = 0.8$} & 0.2  & 0.33436 & 1.2  & 0.03393 & 1.7  & 0.00015 & 1.0 & 0.00049 & 0.7  & 0.16118 & -0.1 & 0.00000 \\
\textbf{$M_s = 0.9$} & -0.2 & 0.36223 & 1.6  & 0.00338 & 1.8  & 0.00011 & 1.5 & 0.00000 & 0.6  & 0.07503 & -0.1 & 0.00000 \\
\textbf{$M_s = 1.0$} & 1.1  & 0.07827 & 0.6  & 0.08720 & 0.6  & 0.00834 & 0.4 & 0.00017 & 0.3  & 0.00165 & -0.1 & 0.00000 \\
\textbf{$M_s = 1.1$} & -0.1 & 0.49041 & 0.0  & 0.64625 & 0.0  & 0.11982 & 0.0 & 0.18597 & -0.1 & 0.00006 & -0.1 & 0.00000 \\
\textbf{$M_s = 1.2$} & 0.1  & 0.05751 & 0.0  & 0.15660 & 0.0  & 0.38672 & 0.0 & 0.00003 & -0.1 & 0.00000 & -0.1 & 0.00000 \\ \hline
\end{tabular}
}
\end{table*}


\begin{table*}[t!]
\centering
\caption{Gini Coefficient (lower is fairer)\\
Results (averaged over 8 trials) for increasing population size ($N \in [2, 64]$), with ($G=N$) and without ($G=1$) the introduced signal, for environments of decreasing difficulty (increasing $S_{eq} \propto M_s$).}
\label{tab: Gini Coefficient}
\resizebox{\textwidth}{!}{%
\begin{tabular}{@{}r|cccccccccccc@{}}
\toprule
&
\multicolumn{2}{c}{$N=$2} &
\multicolumn{2}{c}{$N=4$} &
\multicolumn{2}{c}{$N=8$} &
\multicolumn{2}{c}{$N=16$} &
\multicolumn{2}{c}{$N=32$} &
\multicolumn{2}{c}{$N=64$} \\
&
$G=1$ &
$G=N$ &
$G=1$ &
$G=N$ &
$G=1$ &
$G=N$ &
$G=1$ &
$G=N$ &
$G=1$ &
$G=N$ &
$G=1$ &
$G=N$ \\ \midrule
\textbf{$M_s = 0.2$} & 0.00583 & 0.00551 & 0.00764 & 0.00663 & 0.01048 & 0.00956 & 0.00991 & 0.01087 & 0.01323 & 0.01282 & 0.01459 & 0.01356 \\
\textbf{$M_s = 0.3$} & 0.00674 & 0.00390 & 0.00644 & 0.00656 & 0.00983 & 0.00855 & 0.01040 & 0.01085 & 0.01135 & 0.01468 & 0.01416 & 0.01345 \\
\textbf{$M_s = 0.4$} & 0.03897 & 0.03083 & 0.01409 & 0.01667 & 0.05032 & 0.05879 & 0.06336 & 0.04329 & 0.06617 & 0.03841 & 0.05399 & 0.03328 \\
\textbf{$M_s = 0.5$} & 0.05571 & 0.05590 & 0.08869 & 0.07415 & 0.07528 & 0.05419 & 0.06839 & 0.04166 & 0.05646 & 0.03109 & 0.03890 & 0.02668 \\
\textbf{$M_s = 0.6$} & 0.05350 & 0.01113 & 0.06094 & 0.03547 & 0.07879 & 0.04711 & 0.07156 & 0.03444 & 0.01466 & 0.03405 & 0.01272 & 0.02298 \\
\textbf{$M_s = 0.7$} & 0.02754 & 0.00215 & 0.05777 & 0.04123 & 0.06279 & 0.03842 & 0.05894 & 0.03514 & 0.01252 & 0.03259 & 0.00038 & 0.02537 \\
\textbf{$M_s = 0.8$} & 0.02160 & 0.01446 & 0.07129 & 0.04442 & 0.07916 & 0.03539 & 0.06264 & 0.03255 & 0.03503 & 0.02884 & 0.00012 & 0.02059 \\
\textbf{$M_s = 0.9$} & 0.01901 & 0.02562 & 0.08016 & 0.03768 & 0.07998 & 0.03427 & 0.07229 & 0.02973 & 0.03275 & 0.02440 & 0.00027 & 0.01639 \\
\textbf{$M_s = 1.0$} & 0.04736 & 0.02190 & 0.05277 & 0.03522 & 0.04490 & 0.02877 & 0.03175 & 0.01899 & 0.01802 & 0.01524 & 0.00012 & 0.01480 \\
\textbf{$M_s = 1.1$} & 0.02741 & 0.03299 & 0.02376 & 0.02583 & 0.00392 & 0.00692 & 0.00119 & 0.00191 & 0.00035 & 0.01244 & 0.00038 & 0.01670 \\
\textbf{$M_s = 1.2$} & 0.01585 & 0.00951 & 0.00138 & 0.00358 & 0.00102 & 0.00154 & 0.00078 & 0.00235 & 0.00083 & 0.01604 & 0.00113 & 0.01958 \\ \hline
\end{tabular}
} \bigskip\bigskip
\end{table*}

\begin{table*}[t!]
\centering
\caption{Gini Coefficient (Relative difference \& p-values)\\
(i) Relative difference in Gini Coefficient when signal of cardinality $G=N$ is introduced ($(\text{Result}_{G=N} - \text{Result}_{G=1}) / \text{Result}_{G=1}$, where $\text{Result}_{G=X}$ denotes the achieved result using a signal of cardinality $X$), and \\
(ii) Student's T-test p-values, \\
for varying population size ($N \in [2, 64]$) and environments of decreasing difficulty (increasing $S_{eq} \propto M_s$).}
\label{tab: Gini Coefficient relative difference and p values}
\resizebox{\textwidth}{!}{%
\begin{tabular}{@{}r|cccccccccccc@{}}
\toprule
&
\multicolumn{2}{c}{$N=$2} &
\multicolumn{2}{c}{$N=4$} &
\multicolumn{2}{c}{$N=8$} &
\multicolumn{2}{c}{$N=16$} &
\multicolumn{2}{c}{$N=32$} &
\multicolumn{2}{c}{$N=64$} \\
&
(\%) &
p 	 &
(\%) &
p 	 &
(\%) &
p 	 &
(\%) &
p 	 &
(\%) &
p 	 &
(\%) &
p 	 \\ \midrule
\textbf{$M_s = 0.2$} & -5.5  & 0.89907 & -13.2 & 0.38327 & -8.7  & 0.54803 & 9.7   & 0.30792 & -3.2   & 0.66529 & -7.0    & 0.10096 \\
\textbf{$M_s = 0.3$} & -42.1 & 0.18044 & 1.9   & 0.93581 & -13.0 & 0.32875 & 4.4   & 0.75014 & 29.3   & 0.00649 & -5.0    & 0.29447 \\
\textbf{$M_s = 0.4$} & -20.9 & 0.55214 & 18.3  & 0.65124 & 16.8  & 0.32715 & -31.7 & 0.00172 & -42.0  & 0.00005 & -38.3   & 0.00000 \\
\textbf{$M_s = 0.5$} & 0.3   & 0.99318 & -16.4 & 0.29791 & -28.0 & 0.07229 & -39.1 & 0.00045 & -44.9  & 0.00016 & -31.4   & 0.00013 \\
\textbf{$M_s = 0.6$} & -79.2 & 0.01921 & -41.8 & 0.01744 & -40.2 & 0.00115 & -51.9 & 0.00000 & 132.2  & 0.00000 & 80.6    & 0.00000 \\
\textbf{$M_s = 0.7$} & -92.2 & 0.05186 & -28.6 & 0.19899 & -38.8 & 0.00318 & -40.4 & 0.00002 & 160.3  & 0.12355 & 6619.1  & 0.00000 \\
\textbf{$M_s = 0.8$} & -33.1 & 0.33714 & -37.7 & 0.01429 & -55.3 & 0.00008 & -48.0 & 0.00005 & -17.7  & 0.71927 & 16702.9 & 0.00000 \\
\textbf{$M_s = 0.9$} & 34.8  & 0.51138 & -53.0 & 0.00151 & -57.2 & 0.00001 & -58.9 & 0.00000 & -25.5  & 0.51399 & 6016.9  & 0.00000 \\
\textbf{$M_s = 1.0$} & -53.8 & 0.08247 & -33.3 & 0.06736 & -35.9 & 0.00585 & -40.2 & 0.00000 & -15.4  & 0.16010 & 12183.2 & 0.00000 \\
\textbf{$M_s = 1.1$} & 20.3  & 0.40907 & 8.7   & 0.53123 & 76.5  & 0.05059 & 60.5  & 0.01015 & 3416.3 & 0.00000 & 4296.4  & 0.00000 \\
\textbf{$M_s = 1.2$} & -40.0 & 0.13271 & 159.9 & 0.05210 & 50.1  & 0.29175 & 201.1 & 0.00000 & 1839.3 & 0.00000 & 1640.3  & 0.00000 \\ \hline
\end{tabular}
}
\end{table*}


\begin{table*}[t!]
\centering
\caption{CIC values\\
Results (averaged over the 8 trials and the agents in the population) for increasing population size ($N \in [4, 64]$) and environments of decreasing difficulty (increasing $S_{eq} \propto M_s$).}
\label{tab: CIC values}
\resizebox{0.5\textwidth}{!}{%
\begin{tabular}{@{}r|ccccc@{}}
\toprule
 & \multicolumn{1}{c}{$N=4$} & \multicolumn{1}{c}{$N=8$} & \multicolumn{1}{c}{$N=16$} & \multicolumn{1}{c}{$N=32$} & \multicolumn{1}{c}{$N=64$} \\ \midrule
\textbf{$M_s = 0.2$} & 0.037 & 0.043 & 0.046 & 0.047 & 0.048 \\
\textbf{$M_s = 0.3$} & 0.037 & 0.043 & 0.047 & 0.050 & 0.054 \\
\textbf{$M_s = 0.4$} & 0.424 & 1.008 & 0.611 & 0.122 & 0.188 \\
\textbf{$M_s = 0.5$} & 0.908 & 0.841 & 0.542 & 0.225 & 0.144 \\
\textbf{$M_s = 0.6$} & 0.798 & 0.658 & 0.281 & 0.530 & 0.130 \\
\textbf{$M_s = 0.7$} & 0.710 & 0.710 & 0.454 & 0.292 & 0.210 \\
\textbf{$M_s = 0.8$} & 0.741 & 0.668 & 0.513 & 0.267 & 0.210 \\
\textbf{$M_s = 0.9$} & 0.465 & 0.429 & 0.378 & 0.251 & 0.266 \\
\textbf{$M_s = 1.0$} & 0.406 & 0.513 & 0.265 & 0.382 & 0.331 \\
\textbf{$M_s = 1.1$} & 0.141 & 0.180 & 0.135 & 0.221 & 0.296 \\
\textbf{$M_s = 1.2$} & 0.080 & 0.095 & 0.150 & 0.218 & 0.301 \\ \bottomrule
\end{tabular}%
}\bigskip\bigskip
\end{table*}


\begin{table*}[t!]
\centering
\caption{Social Welfare \\
Results (averaged over 8 trials) for varying signal size ($G = \{ 1, \frac{N}{2}, 23, N, 41, \frac{3N}{2} \}$, where $N=32$) and equilibrium stock multiplier ($M_s$ values of $0.7$, $0.8$ and $0.9$). The following results include:\\
(i) Absolute values,\\
(ii) Relative difference (\%), i.e., $(\text{Result}_{G=X} - \text{Result}_{G=1}) / \text{Result}_{G=1}$, where $\text{Result}_{G=X}$ denotes the achieved result using a signal of cardinality $X \in \{ \frac{N}{2}, 23, N, 41, \frac{3N}{2} \}$, and\\
(iii) Student's T-test p-values with respect to $G = 1$}
\label{tab: Varying Signal Size Social Welfare}
\resizebox{\textwidth}{!}{%
\begin{tabular}{@{}r|cccccc|ccccc|ccccc@{}}
\toprule
\multicolumn{1}{c|}{\textbf{}} &
  \multicolumn{6}{c|}{\textbf{Absolute Values}} &
  \multicolumn{5}{c|}{\textbf{Relative Difference (\%)}} &
  \multicolumn{5}{c}{\textbf{p-values}} \\
\multicolumn{1}{c|}{\textbf{}} &
  \textbf{$G = 1$} &
  \textbf{$G = \frac{N}{2}$} &
  \textbf{$G = 23$} &
  \textbf{$G = N$} &
  \textbf{$G = 41$} &
  \textbf{$G = \frac{3N}{2}$} &
  \textbf{$G = \frac{N}{2}$} &
  \textbf{$G = 23$} &
  \textbf{$G = N$} &
  \textbf{$G = 41$} &
  \textbf{$G = \frac{3N}{2}$} &
  \textbf{$G = \frac{N}{2}$} &
  \textbf{$G = 23$} &
  \textbf{$G = N$} &
  \textbf{$G = 41$} &
  \textbf{$G = \frac{3N}{2}$} \\ \midrule
\textbf{$M_s = 0.7$} & 34.14 & 184.82 & 158.95 & 196.57 & 219.33 & 221.43 & 441.3 & 365.6 & 475.7 & 542.4 & 548.5 & 0.00000 & 0.00014 & 0.00000 & 0.00000 & 0.00000 \\
\textbf{$M_s = 0.8$} & 86.95 & 249.51 & 260.94 & 316.08 & 376.67 & 423.03 & 187.0 & 200.1 & 263.5 & 333.2 & 386.5 & 0.00005 & 0.00004 & 0.00000 & 0.00000 & 0.00000 \\
\textbf{$M_s = 0.9$} & 165.75 & 431.43 & 497.05 & 637.30 & 784.72 & 970.48 & 160.3 & 199.9 & 284.5 & 373.4 & 485.5 & 0.00030 & 0.00007 & 0.00000 & 0.00000 & 0.00000 \\ \bottomrule
\end{tabular}%
}\bigskip\bigskip
\end{table*}

\begin{table*}[t!]
\centering
\caption{Episode Length (\#time-steps)\\
Results (averaged over 8 trials) for varying signal size ($G = \{ 1, \frac{N}{2}, 23, N, 41, \frac{3N}{2} \}$, where $N=32$) and equilibrium stock multiplier ($M_s$ values of $0.7$, $0.8$ and $0.9$). The following results include:\\
(i) Absolute values,\\
(ii) Relative difference (\%), i.e., $(\text{Result}_{G=X} - \text{Result}_{G=1}) / \text{Result}_{G=1}$, where $\text{Result}_{G=X}$ denotes the achieved result using a signal of cardinality $X \in \{ \frac{N}{2}, 23, N, 41, \frac{3N}{2} \}$, and\\
(iii) Student's T-test p-values with respect to $G = 1$}
\label{tab: Varying Signal Size Episode Length}
\resizebox{\textwidth}{!}{%
\begin{tabular}{@{}r|cccccc|ccccc|ccccc@{}}
\toprule
\multicolumn{1}{c|}{\textbf{}} &
  \multicolumn{6}{c|}{\textbf{Absolute Values}} &
  \multicolumn{5}{c|}{\textbf{Relative Difference (\%)}} &
  \multicolumn{5}{c}{\textbf{p-values}} \\
\multicolumn{1}{c|}{\textbf{}} &
  \textbf{$G = 1$} &
  \textbf{$G = \frac{N}{2}$} &
  \textbf{$G = 23$} &
  \textbf{$G = N$} &
  \textbf{$G = 41$} &
  \textbf{$G = \frac{3N}{2}$} &
  \textbf{$G = \frac{N}{2}$} &
  \textbf{$G = 23$} &
  \textbf{$G = N$} &
  \textbf{$G = 41$} &
  \textbf{$G = \frac{3N}{2}$} &
  \textbf{$G = \frac{N}{2}$} &
  \textbf{$G = 23$} &
  \textbf{$G = N$} &
  \textbf{$G = 41$} &
  \textbf{$G = \frac{3N}{2}$} \\ \midrule
\textbf{$M_s = 0.7$} & 70.78 & 500.00 & 440.16 & 500.00 & 500.00 & 500.00 & 606.4 & 521.9 & 606.4 & 606.4 & 606.4 & 0.00001 & 0.00072 & 0.00001 & 0.00001 & 0.00001 \\
\textbf{$M_s = 0.8$} & 200.63 & 500.00 & 500.00 & 500.00 & 500.00 & 500.00 & 149.2 & 149.2 & 149.2 & 149.2 & 149.2 & 0.00418 & 0.00418 & 0.00418 & 0.00418 & 0.00418 \\
\textbf{$M_s = 0.9$} & 275.00 & 500.00 & 500.00 & 500.00 & 500.00 & 500.00 & 81.8 & 81.8 & 81.8 & 81.8 & 81.8 & 0.01919 & 0.01919 & 0.01919 & 0.01919 & 0.01919 \\ \bottomrule
\end{tabular}%
}\bigskip\bigskip
\end{table*}

\begin{table*}[t!]
\centering
\caption{Training Time (\#episodes) \\
Results (averaged over 8 trials) for varying signal size ($G = \{ 1, \frac{N}{2}, 23, N, 41, \frac{3N}{2} \}$, where $N=32$) and equilibrium stock multiplier ($M_s$ values of $0.7$, $0.8$ and $0.9$). The following results include:\\
(i) Absolute values,\\
(ii) Relative difference (\%), i.e., $(\text{Result}_{G=X} - \text{Result}_{G=1}) / \text{Result}_{G=1}$, where $\text{Result}_{G=X}$ denotes the achieved result using a signal of cardinality $X \in \{ \frac{N}{2}, 23, N, 41, \frac{3N}{2} \}$, and\\
(iii) Student's T-test p-values with respect to $G = 1$}
\label{tab: Varying Signal Size Training Time}
\resizebox{\textwidth}{!}{%
\begin{tabular}{@{}r|cccccc|ccccc|ccccc@{}}
\toprule
\multicolumn{1}{c|}{\textbf{}} &
  \multicolumn{6}{c|}{\textbf{Absolute Values}} &
  \multicolumn{5}{c|}{\textbf{Relative Difference (\%)}} &
  \multicolumn{5}{c}{\textbf{p-values}} \\
\multicolumn{1}{c|}{\textbf{}} &
  \textbf{$G = 1$} &
  \textbf{$G = \frac{N}{2}$} &
  \textbf{$G = 23$} &
  \textbf{$G = N$} &
  \textbf{$G = 41$} &
  \textbf{$G = \frac{3N}{2}$} &
  \textbf{$G = \frac{N}{2}$} &
  \textbf{$G = 23$} &
  \textbf{$G = N$} &
  \textbf{$G = 41$} &
  \textbf{$G = \frac{3N}{2}$} &
  \textbf{$G = \frac{N}{2}$} &
  \textbf{$G = 23$} &
  \textbf{$G = N$} &
  \textbf{$G = 41$} &
  \textbf{$G = \frac{3N}{2}$} \\ \midrule
\textbf{$M_s = 0.7$} & 4757.00 & 4039.00 & 3493.00 & 3279.00 & 3147.00 & 3767.00 & -15.1 & -26.6 & -31.1 & -33.8 & -20.8 & 0.00001 & 0.01975 & 0.00001 & 0.00000 & 0.00001 \\
\textbf{$M_s = 0.8$} & 4875.75 & 3572.00 & 2933.00 & 2890.00 & 2680.00 & 3228.00 & -26.7 & -39.8 & -40.7 & -45.0 & -33.8 & 0.01919 & 0.00102 & 0.01919 & 0.00000 & 0.01919 \\
\textbf{$M_s = 0.9$} & 4511.13 & 3479.00 & 2386.00 & 2896.00 & 2575.00 & 2123.00 & -22.9 & -47.1 & -35.8 & -42.9 & -52.9 & 0.01919 & 0.00004 & 0.01919 & 0.00003 & 0.01919 \\ \bottomrule
\end{tabular}%
}\bigskip\bigskip
\end{table*}

\begin{table*}[t!]
\centering
\caption{Jain Index (higher is better) \\
Results (averaged over 8 trials) for varying signal size ($G = \{ 1, \frac{N}{2}, 23, N, 41, \frac{3N}{2} \}$, where $N=32$) and equilibrium stock multiplier ($M_s$ values of $0.7$, $0.8$ and $0.9$). The following results include:\\
(i) Absolute values,\\
(ii) Relative difference (\%), i.e., $(\text{Result}_{G=X} - \text{Result}_{G=1}) / \text{Result}_{G=1}$, where $\text{Result}_{G=X}$ denotes the achieved result using a signal of cardinality $X \in \{ \frac{N}{2}, 23, N, 41, \frac{3N}{2} \}$, and\\
(iii) Student's T-test p-values with respect to $G = 1$}
\label{tab: Varying Signal Size Jain Index}
\resizebox{\textwidth}{!}{%
\begin{tabular}{@{}r|cccccc|ccccc|ccccc@{}}
\toprule
\multicolumn{1}{c|}{\textbf{}} &
  \multicolumn{6}{c|}{\textbf{Absolute Values}} &
  \multicolumn{5}{c|}{\textbf{Relative Difference (\%)}} &
  \multicolumn{5}{c}{\textbf{p-values}} \\
\multicolumn{1}{c|}{\textbf{}} &
  \textbf{$G = 1$} &
  \textbf{$G = \frac{N}{2}$} &
  \textbf{$G = 23$} &
  \textbf{$G = N$} &
  \textbf{$G = 41$} &
  \textbf{$G = \frac{3N}{2}$} &
  \textbf{$G = \frac{N}{2}$} &
  \textbf{$G = 23$} &
  \textbf{$G = N$} &
  \textbf{$G = 41$} &
  \textbf{$G = \frac{3N}{2}$} &
  \textbf{$G = \frac{N}{2}$} &
  \textbf{$G = 23$} &
  \textbf{$G = N$} &
  \textbf{$G = 41$} &
  \textbf{$G = \frac{3N}{2}$} \\ \midrule
\textbf{$M_s = 0.7$} & 0.99635 & 0.99489 & 0.99594 & 0.99660 & 0.99744 & 0.99686 & -0.1 & 0.0 & 0.0 & 0.1 & 0.1 & 0.69758 & 0.91342 & 0.94743 & 0.77056 & 0.89174 \\
\textbf{$M_s = 0.8$} & 0.99023 & 0.99517 & 0.99665 & 0.99733 & 0.99742 & 0.99780 & 0.5 & 0.6 & 0.7 & 0.7 & 0.8 & 0.32098 & 0.20210 & 0.16118 & 0.15617 & 0.13698 \\
\textbf{$M_s = 0.9$} & 0.99234 & 0.99718 & 0.99777 & 0.99810 & 0.99863 & 0.99884 & 0.5 & 0.5 & 0.6 & 0.6 & 0.7 & 0.12907 & 0.09127 & 0.07503 & 0.05402 & 0.04728 \\ \bottomrule
\end{tabular}%
}\bigskip\bigskip
\end{table*}

\begin{table*}[t!]
\centering
\caption{Gini Coefficient \\
Results (averaged over 8 trials) for varying signal size ($G = \{ 1, \frac{N}{2}, 23, N, 41, \frac{3N}{2} \}$, where $N=32$) and equilibrium stock multiplier ($M_s$ values of $0.7$, $0.8$ and $0.9$). The following results include:\\
(i) Absolute values,\\
(ii) Relative difference (\%), i.e., $(\text{Result}_{G=X} - \text{Result}_{G=1}) / \text{Result}_{G=1}$, where $\text{Result}_{G=X}$ denotes the achieved result using a signal of cardinality $X \in \{ \frac{N}{2}, 23, N, 41, \frac{3N}{2} \}$, and\\
(iii) Student's T-test p-values with respect to $G = 1$}
\label{tab: Varying Signal Size Gini Coefficient}
\resizebox{\textwidth}{!}{%
\begin{tabular}{@{}r|cccccc|ccccc|ccccc@{}}
\toprule
\multicolumn{1}{c|}{\textbf{}} &
  \multicolumn{6}{c|}{\textbf{Absolute Values}} &
  \multicolumn{5}{c|}{\textbf{Relative Difference (\%)}} &
  \multicolumn{5}{c}{\textbf{p-values}} \\
\multicolumn{1}{c|}{\textbf{}} &
  \textbf{$G = 1$} &
  \textbf{$G = \frac{N}{2}$} &
  \textbf{$G = 23$} &
  \textbf{$G = N$} &
  \textbf{$G = 41$} &
  \textbf{$G = \frac{3N}{2}$} &
  \textbf{$G = \frac{N}{2}$} &
  \textbf{$G = 23$} &
  \textbf{$G = N$} &
  \textbf{$G = 41$} &
  \textbf{$G = \frac{3N}{2}$} &
  \textbf{$G = \frac{N}{2}$} &
  \textbf{$G = 23$} &
  \textbf{$G = N$} &
  \textbf{$G = 41$} &
  \textbf{$G = \frac{3N}{2}$} \\ \midrule
\textbf{$M_s = 0.7$} & 0.01252 & 0.04010 & 0.03525 & 0.03259 & 0.02801 & 0.03110 & 220.3 & 181.5 & 160.3 & 123.7 & 148.4 & 0.04090 & 0.08751 & 0.12355 & 0.22435 & 0.15164 \\
\textbf{$M_s = 0.8$} & 0.03503 & 0.03873 & 0.03253 & 0.02884 & 0.02838 & 0.02590 & 10.6 & -7.1 & -17.7 & -19.0 & -26.1 & 0.82971 & 0.88448 & 0.71927 & 0.69934 & 0.59813 \\
\textbf{$M_s = 0.9$} & 0.03275 & 0.02992 & 0.02646 & 0.02440 & 0.02065 & 0.01885 & -8.7 & -19.2 & -25.5 & -37.0 & -42.4 & 0.82380 & 0.62158 & 0.51399 & 0.34755 & 0.28352 \\ \bottomrule
\end{tabular}%
}\bigskip\bigskip\bigskip\bigskip\bigskip\bigskip\bigskip\bigskip
\end{table*}


\begin{table*}[t!]
\centering
\caption{Social Welfare, Episode Length, Training Time, Jain Index, Gini Coefficient\\
Results (averaged over 8 trials) for higher growth rate ($r = 2$), with ($G=N$) and without ($G=1$) the introduced signal, for environments of decreasing difficulty (increasing $S_{eq} \propto M_s$), and population size $N = 64$.}
\label{tab: Growth Rate 2}
\resizebox{\textwidth}{!}{%
\begin{tabular}{@{}r|cccccccccc@{}}
\toprule
\textbf{} &
  \multicolumn{2}{c}{\textbf{Social Welfare}} &
  \multicolumn{2}{c}{\textbf{Episode Length}} &
  \multicolumn{2}{c}{\textbf{Training Time}} &
  \multicolumn{2}{c}{\textbf{Jain Index}} &
  \multicolumn{2}{c}{\textbf{Gini Coefficient}} \\
\textbf{} &
  \textbf{$G = 1$} &
  \textbf{$G = N$} &
  \textbf{$G = 1$} &
  \textbf{$G = N$} &
  \textbf{$G = 1$} &
  \textbf{$G = N$} &
  \textbf{$G = 1$} &
  \textbf{$G = N$} &
  \textbf{$G = 1$} &
  \textbf{$G = N$} \\ \midrule
\textbf{$M_s = 0.2$} & 7.52     & 7.53     & 1.04   & 1.05   & 5000.00 & 5000.00 & 0.99955 & 0.99959 & 0.01193 & 0.01129 \\
\textbf{$M_s = 0.3$} & 11.65    & 11.57    & 1.15   & 1.13   & 5000.00 & 5000.00 & 0.99953 & 0.99963 & 0.01215 & 0.01084 \\
\textbf{$M_s = 0.4$} & 21.00    & 19.89    & 2.54   & 2.24   & 5000.00 & 5000.00 & 0.99776 & 0.99795 & 0.02611 & 0.02549 \\
\textbf{$M_s = 0.5$} & 23.52    & 22.52    & 2.52   & 2.19   & 5000.00 & 5000.00 & 0.99467 & 0.99724 & 0.04075 & 0.02941 \\
\textbf{$M_s = 0.6$} & 23.91    & 26.33    & 1.71   & 2.72   & 5000.00 & 5000.00 & 0.99243 & 0.99696 & 0.04894 & 0.03108 \\
\textbf{$M_s = 0.7$} & 26.62    & 354.47   & 3.08   & 306.34 & 5000.00 & 5000.00 & 0.99066 & 0.99728 & 0.05172 & 0.02909 \\
\textbf{$M_s = 0.8$} & 219.16   & 922.99   & 190.03 & 453.91 & 4957.13 & 5000.00 & 0.97867 & 0.99834 & 0.06876 & 0.02285 \\
\textbf{$M_s = 0.9$} & 1320.48  & 7171.94  & 500.00 & 500.00 & 4900.00 & 2053.00 & 0.97524 & 0.99911 & 0.07791 & 0.01666 \\
\textbf{$M_s = 1.0$} & 2490.16  & 16849.49 & 500.00 & 500.00 & 3003.00 & 909.00  & 0.99858 & 0.99890 & 0.00612 & 0.01856 \\
\textbf{$M_s = 1.1$} & 17286.26 & 18737.69 & 500.00 & 500.00 & 853.00  & 1193.00 & 0.99997 & 0.99914 & 0.00192 & 0.01640 \\
\textbf{$M_s = 1.2$} & 20740.25 & 19354.80 & 500.00 & 500.00 & 781.00  & 1427.00 & 0.99996 & 0.99930 & 0.00220 & 0.01480 \\ \bottomrule
\end{tabular}%
}\bigskip\bigskip
\end{table*}

\begin{table*}[t!]
\centering
\caption{Social Welfare, Episode Length, Training Time, Jain Index, Gini Coefficient\\
(i) Relative difference in the achieved result when signal of cardinality $G=N$ is introduced ($(\text{Result}_{G=N} - \text{Result}_{G=1}) / \text{Result}_{G=1}$, where $\text{Result}_{G=X}$ denotes the achieved result using a signal of cardinality $X$), and \\
(ii) Student's T-test p-values, \\
Results (averaged over 8 trials) for higher growth rate ($r = 2$), with ($G=N$) and without ($G=1$) the introduced signal, for environments of decreasing difficulty (increasing $S_{eq} \propto M_s$), and population size $N = 64$.}
\label{tab: Growth Rate 2 relative difference and p values}
\resizebox{\textwidth}{!}{%
\begin{tabular}{@{}r|cccccccccc@{}}
\toprule
\textbf{} &
  \multicolumn{2}{c}{\textbf{Social Welfare}} &
  \multicolumn{2}{c}{\textbf{Episode Length}} &
  \multicolumn{2}{c}{\textbf{Training Time}} &
  \multicolumn{2}{c}{\textbf{Jain Index}} &
  \multicolumn{2}{c}{\textbf{Gini Coefficient}} \\
\textbf{} &
  \textbf{(\%)} &
  \textbf{p-value} &
  \textbf{(\%)} &
  \textbf{p-value} &
  \textbf{(\%)} &
  \textbf{p-value} &
  \textbf{(\%)} &
  \textbf{p-value} &
  \textbf{(\%)} &
  \textbf{p-value} \\ \midrule
\textbf{$M_s = 0.2$} & 0.1    & 0.19975 & 0.2    & 0.17555 & 0.0   & NaN     & 0.0  & 0.21075 & -5.4  & 0.20205 \\
\textbf{$M_s = 0.3$} & -0.7   & 0.00261 & -1.7   & 0.00274 & 0.0   & NaN     & 0.0  & 0.00050 & -10.8 & 0.00096 \\
\textbf{$M_s = 0.4$} & -5.3   & 0.66393 & -11.8  & 0.63772 & 0.0   & NaN     & 0.0  & 0.41637 & -2.4  & 0.65179 \\
\textbf{$M_s = 0.5$} & -4.2   & 0.52530 & -13.1  & 0.53060 & 0.0   & NaN     & 0.3  & 0.00004 & -27.8 & 0.00001 \\
\textbf{$M_s = 0.6$} & 10.1   & 0.01996 & 59.3   & 0.02644 & 0.0   & NaN     & 0.5  & 0.00000 & -36.5 & 0.00000 \\
\textbf{$M_s = 0.7$} & 1231.4 & 0.00000 & 9862.3 & 0.00000 & 0.0   & NaN     & 0.7  & 0.00005 & -43.8 & 0.00001 \\
\textbf{$M_s = 0.8$} & 321.1  & 0.00006 & 138.9  & 0.01041 & 0.9   & 0.33428 & 2.0  & 0.00097 & -66.8 & 0.00204 \\
\textbf{$M_s = 0.9$} & 443.1  & 0.00000 & 0.0    & NaN     & -58.1 & 0.00000 & 2.4  & 0.00000 & -78.6 & 0.00000 \\
\textbf{$M_s = 1.0$} & 576.6  & 0.00000 & 0.0    & NaN     & -69.7 & 0.00000 & 0.0  & 0.51998 & 203.5 & 0.00000 \\
\textbf{$M_s = 1.1$} & 8.4    & 0.00000 & 0.0    & NaN     & 39.9  & 0.00117 & -0.1 & 0.00000 & 753.5 & 0.00000 \\
\textbf{$M_s = 1.2$} & -6.7   & 0.00000 & 0.0    & NaN     & 82.7  & 0.00000 & -0.1 & 0.00000 & 573.6 & 0.00000 \\ \bottomrule
\end{tabular}%
}\bigskip\bigskip
\end{table*}


\begin{table*}[t!]
\centering
\caption{Average number of agents in each bin (i.e., harvesting with effort $\epsilon \in [0 - 0.33)$ (`idle'), $[0.33 - 0.66)$ (`moderate'), and $[0.66 - 1]$ (`active')). The presented values start from the first equilibrium stock multiplier ($M_s$) where a non-depleting strategy was achieved in each setting.\\
Results (averaged over 8 trials) for increasing population size, with ($G=N$) and without ($G=1$) the introduced signal, for environments of decreasing difficulty (increasing $S_{eq} \propto M_s$).}
\label{tab: Access Rate}

\resizebox{0.8\textwidth}{!}{%
\begin{tabular}{@{}r|cccccccccc@{}}
\multicolumn{11}{l}{\textbf{Number of `idle' agents: $\epsilon \in [0 - 0.33)$}} \\
\toprule
\textbf{} &
  \multicolumn{2}{c}{\textbf{$N = 8$}} &
  \multicolumn{2}{c}{\textbf{$N = 16$}} &
  \multicolumn{2}{c}{\textbf{$N = 32$}} &
  \multicolumn{2}{c}{\textbf{$N = 64$, $r = 1$}} &
  \multicolumn{2}{c}{\textbf{$N = 64$, $r = 2$}} \\
\textbf{} &
  \textbf{$G = 1$} &
  \textbf{$G = N$} &
  \textbf{$G = 1$} &
  \textbf{$G = N$} &
  \textbf{$G = 1$} &
  \textbf{$G = N$} &
  \textbf{$G = 1$} &
  \textbf{$G = N$} &
  \textbf{$G = 1$} &
  \textbf{$G = N$} \\ \midrule
\textbf{$M_s = 0.2$} &  &  &  &  &  &  &  &  &  &  \\
\textbf{$M_s = 0.3$} &  &  &  &  &  &  &  &  &  &  \\
\textbf{$M_s = 0.4$} &  & \textbf{4.9} &  & \textbf{9.4} &  &  &  &  &  &  \\
\textbf{$M_s = 0.5$} & \textbf{0.4} & 3.7 &  & 6.9 &  &  &  &  &  &  \\
\textbf{$M_s = 0.6$} & 0.0 & 2.8 & \textbf{0.0} & 5.0 &  & \textbf{12.1} &  &  &  &  \\
\textbf{$M_s = 0.7$} & 0.0 & 1.8 & 0.0 & 3.3 &  & 6.6 &  &  &  & \textbf{31.6} \\
\textbf{$M_s = 0.8$} & 0.0 & 1.0 & 0.0 & 2.0 &  & 3.6 &  &  &  & 5.9 \\
\textbf{$M_s = 0.9$} & 0.0 & 0.4 & 0.0 & 0.8 &  & 1.4 &  & \textbf{2.0} &  & 1.8 \\
\textbf{$M_s = 1.0$} & 0.0 & 0.2 & 0.0 & 0.2 & \textbf{0.0} & 0.3 & \textbf{0.0} & 1.2 & \textbf{0.0} & 1.6 \\
\textbf{$M_s = 1.1$} & 0.0 & 0.0 & 0.0 & 0.0 & 0.0 & 0.2 & 0.0 & 1.6 & 0.0 & 1.3 \\
\textbf{$M_s = 1.2$} & 0.0 & 0.0 & 0.0 & 0.0 & 0.0 & 0.2 & 0.0 & 1.9 & 0.0 & 1.1 \\ \bottomrule
\end{tabular}%
} \bigskip\bigskip

\resizebox{0.8\textwidth}{!}{%
\begin{tabular}{@{}r|cccccccccc@{}}
\multicolumn{11}{l}{\textbf{Number of `moderate' agents: $\epsilon \in [0.33 - 0.66)$}} \\
\toprule
\textbf{} &
  \multicolumn{2}{c}{\textbf{$N = 8$}} &
  \multicolumn{2}{c}{\textbf{$N = 16$}} &
  \multicolumn{2}{c}{\textbf{$N = 32$}} &
  \multicolumn{2}{c}{\textbf{$N = 64$, $r = 1$}} &
  \multicolumn{2}{c}{\textbf{$N = 64$, $r = 2$}} \\
\textbf{} &
  \textbf{$G = 1$} &
  \textbf{$G = N$} &
  \textbf{$G = 1$} &
  \textbf{$G = N$} &
  \textbf{$G = 1$} &
  \textbf{$G = N$} &
  \textbf{$G = 1$} &
  \textbf{$G = N$} &
  \textbf{$G = 1$} &
  \textbf{$G = N$} \\ \midrule
\textbf{$M_s = 0.2$} &  &  &  &  &  &  &  &  &  &  \\
\textbf{$M_s = 0.3$} &  &  &  &  &  &  &  &  &  &  \\
\textbf{$M_s = 0.4$} &  & \textbf{0.7} &  & \textbf{2.0} &  &  &  &  &  &  \\
\textbf{$M_s = 0.5$} & \textbf{6.8} & 1.0 &  & 2.5 &  &  &  &  &  &  \\
\textbf{$M_s = 0.6$} & 6.4 & 1.3 & \textbf{13.4} & 3.0 &  & \textbf{5.8} &  &  &  &  \\
\textbf{$M_s = 0.7$} & 3.6 & 1.5 & 6.1 & 3.0 &  & 5.6 &  &  &  & \textbf{6.6} \\
\textbf{$M_s = 0.8$} & 1.6 & 1.5 & 1.5 & 2.3 &  & 4.8 &  &  &  & 11.6 \\
\textbf{$M_s = 0.9$} & 0.4 & 0.9 & 0.6 & 1.6 &  & 3.3 &  & \textbf{7.8} &  & 7.8 \\
\textbf{$M_s = 1.0$} & 0.1 & 0.5 & 0.1 & 0.5 & \textbf{0.4} & 1.5 & \textbf{0.0} & 6.3 & \textbf{0.1} & 6.6 \\
\textbf{$M_s = 1.1$} & 0.0 & 0.0 & 0.0 & 0.0 & 0.0 & 0.9 & 0.0 & 6.4 & 0.0 & 5.9 \\
\textbf{$M_s = 1.2$} & 0.0 & 0.0 & 0.0 & 0.0 & 0.0 & 1.4 & 0.0 & 7.0 & 0.0 & 5.7 \\ \bottomrule
\end{tabular}%
} \bigskip\bigskip

\resizebox{0.8\textwidth}{!}{%
\begin{tabular}{@{}r|cccccccccc@{}}
\multicolumn{11}{l}{\textbf{Number of `active' agents: $\epsilon \in [0.66 - 1]$}} \\
\toprule
\textbf{} &
  \multicolumn{2}{c}{\textbf{$N = 8$}} &
  \multicolumn{2}{c}{\textbf{$N = 16$}} &
  \multicolumn{2}{c}{\textbf{$N = 32$}} &
  \multicolumn{2}{c}{\textbf{$N = 64$, $r = 1$}} &
  \multicolumn{2}{c}{\textbf{$N = 64$, $r = 2$}} \\
\textbf{} &
  \textbf{$G = 1$} &
  \textbf{$G = N$} &
  \textbf{$G = 1$} &
  \textbf{$G = N$} &
  \textbf{$G = 1$} &
  \textbf{$G = N$} &
  \textbf{$G = 1$} &
  \textbf{$G = N$} &
  \textbf{$G = 1$} &
  \textbf{$G = N$} \\ \midrule
\textbf{$M_s = 0.2$} &  &  &  &  &  &  &  &  &  &  \\
\textbf{$M_s = 0.3$} &  &  &  &  &  &  &  &  &  &  \\
\textbf{$M_s = 0.4$} &  & \textbf{2.4} &  & \textbf{4.5} &  &  &  &  &  &  \\
\textbf{$M_s = 0.5$} & \textbf{0.9} & 3.3 &  & 6.6 &  &  &  &  &  &  \\
\textbf{$M_s = 0.6$} & 1.6 & 3.9 & \textbf{2.6} & 8.0 &  & \textbf{14.1} &  &  &  &  \\
\textbf{$M_s = 0.7$} & 4.4 & 4.7 & 9.9 & 9.8 &  & 19.8 &  &  &  & \textbf{25.7} \\
\textbf{$M_s = 0.8$} & 6.4 & 5.5 & 14.5 & 11.7 &  & 23.6 &  &  &  & 46.6 \\
\textbf{$M_s = 0.9$} & 7.6 & 6.6 & 15.4 & 13.6 &  & 27.3 &  & \textbf{54.3} &  & 54.4 \\
\textbf{$M_s = 1.0$} & 7.9 & 7.4 & 15.9 & 15.3 & \textbf{31.6} & 30.2 & \textbf{64.0} & 56.5 & \textbf{63.9} & 55.8 \\
\textbf{$M_s = 1.1$} & 8.0 & 8.0 & 16.0 & 16.0 & 32.0 & 30.9 & 64.0 & 56.0 & 64.0 & 56.8 \\
\textbf{$M_s = 1.2$} & 8.0 & 8.0 & 16.0 & 16.0 & 32.0 & 30.4 & 64.0 & 55.2 & 64.0 & 57.2 \\ \bottomrule
\end{tabular}%
}
\end{table*}

\end{document}